\documentclass[journal,twoside,final]{IEEEtran}
\normalsize
\usepackage{graphicx}
\usepackage{balance}
\usepackage{amsmath,amssymb,amsthm}
\usepackage{bm}
\usepackage[mathscr]{euscript}
\usepackage{bigints}
\usepackage[FIGTOPCAP]{subfigure}
\usepackage{cite}
\usepackage{xcolor}
\usepackage{multirow}
\usepackage{nicematrix}
\usepackage{mathrsfs}
\usepackage{amsfonts}
\usepackage{cuted}
\usepackage{psfrag}
\usepackage{tipa}
\usepackage{array}
\usepackage{soul}
\usepackage{setspace}
\usepackage[pdf]{epsfig}
\usepackage{booktabs}
\usepackage{tabularx}
\usepackage{makecell}
\usepackage{rotating}
\usepackage{colortbl}
\definecolor{rowalt}{RGB}{245,245,245}

\theoremstyle{plain}
\newtheorem{theorem}{Theorem}
\newtheorem{corollary}{Corollary}

\begin{document}

\title{Distance Distributions Between Nodes in Concentric Disk–Annulus or Sphere–Shell Regions}
\author{Nicholas~Vaiopoulos,~Alexander~Vavoulas,~Harilaos~G.~Sandalidis,~and~Konstantinos~K.~Delibasis
\IEEEcompsocitemizethanks{\IEEEcompsocthanksitem Nicholas Vaiopoulos,~Alexander~Vavoulas,~Harilaos~G.~Sandalidis,~and~Konstantinos~K.~Delibasis are with the Department of Computer Science and Biomedical
Informatics, University of Thessaly, Papasiopoulou 2-4, 35131, Lamia, Greece. e-mail:\{nvaio,vavoulas,sandalidis,kdelimpasis\}@dib.uth.gr.}
}
\maketitle

\begin{abstract}
This letter derives closed-form expressions for the probability density function of the distance between two nodes located in heterogeneous concentric geometries, namely a disk or sphere and a surrounding annulus or spherical shell. Two scenarios are considered: (i) both nodes are independently distributed in different regions, disk or sphere and annulus or shell, and (ii) one node is static in the outer region while the other follows the stationary distribution of the random waypoint model in the inner region. The resulting expressions provide a tractable analytical tool for performance evaluation in concentric wireless regions.
\end{abstract}

\begin{IEEEkeywords}
Wireless networks, annulus, spherical shell, internodal distance distributions, random location, mobility.
\end{IEEEkeywords}

\maketitle
\section{Introduction}

\IEEEPARstart{I}{nternodal} distance statistics are often studied in bounded regions arising from guard zones, exclusion zones, protected infrastructure, or mobility constraints. These settings lead to two-dimensional (2-D) annular geometries or three-dimensional (3-D) spherical shells, which are commonly used in cellular systems, device-to-device (D2D) and Internet of Things (IoT) deployments, relay-assisted networks, and unmanned aerial vehicle (UAV) or non-terrestrial networks. In this context, distance statistics are a key tool for wireless system analysis and design, as node separation directly affects path loss, interference, outage probability, connectivity, and energy consumption \cite{J:Armeniakos}.

While classical results exist for disks and spheres \cite{B:Mathai, J:Moltchanov, J:Parry}, their extension to annular and spherical shell geometries under non-uniform steady-state distributions remains challenging due to multiple intersection regimes.
This letter fills this gap by deriving exact, closed-form, piecewise probability density functions (PDFs) for the distance between two independently distributed nodes in bounded regions. Specifically, results are obtained for a disk of radius $R_1$ and an annulus with an inner radius $R_1$ and an outer radius $R_2>R_1$ in 2-D, as well as for a sphere of radius $R_1$ and a spherical shell with radii $R_1$ and $R_2$ in 3-D. Two scenarios are examined: 
\begin{enumerate}
    \renewcommand{\labelenumi}{\roman{enumi})}
    \item Scenario $s_1$: Both nodes are static, with node~1 located within the disk or sphere of radius $R_1$ and node~2 within the concentric annular region or spherical shell defined by $R_1$ and $R_2$.
    
    \item Scenario $s_2$: Node~1 is mobile within the disk or sphere of radius $R_1$, while node~2 remains static within the same concentric annular region or spherical shell defined by $R_1$ and $R_2$.
\end{enumerate}
Static nodes are assumed to be uniformly distributed over their respective regions, while mobile nodes follow the stationary spatial distribution of the random waypoint (RWP) model \cite{J:Hyytia} with zero pause time, ignoring trajectory dynamics and temporal correlations.

\section{Model Assumptions}
The analysis builds on the framework of \cite{J:Vaiopoulos}, with the key distinction that the two nodes are located in different geometric regions: one within a disk or sphere and the other within an annulus or spherical shell. In contrast, \cite{J:Vaiopoulos} assumes that both nodes are confined to either a disk or a sphere. We consider three geometric regimes based on the ratio $R_2/R_1$, namely $R_2 \leq 2R_1$, $2R_1 < R_2 \leq 3R_1$, and $R_2 > 3R_1$. However, the first two regimes (Tables I and II) lead to identical expressions under all cases and can therefore be unified into a single regime $R_2 \leq 3R_1$.

In general, the analytical expression for the PDF of the internodal distance, $r$, is obtained as
\begin{equation}
  f_{\mathbf{r}}(r)=\int_{0}^{\infty}  f_{\mathbf{r}|\bm{\rho}}(r|\rho)f_{\bm{\rho}}(\rho)\mathrm{d}\rho,
  \label{CondProb} 
\end{equation}
where $f_{\mathbf{r}|\bm{\rho}}(r|\rho)$ denotes the conditional PDF of $r$, given that node 1 is located at a distance $\rho$ from the center. The term $f_{\bm{\rho}}(\rho)$ represents the PDF of the distance of node 1, corresponding to the RWP or the uniform distribution given for 2-D
{\small
\begin{align}
  f_{\bm{\rho}}^{\text{RWP}}(\rho) =\dfrac{4\rho}{R_{1}^{2}}(1-\dfrac{\rho^{2}}{R_{1}^{2}}),
\quad
 f_{\bm{\rho}}^{\text{U}}(\rho)  = \dfrac{2\rho}{R_{1}^{2}}.
  \label{UNrho1} 
\end{align}
}
and for 3-D
{\small
\begin{align}
  f_{\bm{\rho}}^{\text{RWP}}(\rho)\hspace{-2pt} =\hspace{-2pt} \dfrac{35}{72}\left(\dfrac{21 \rho^2}{R_{1}^3}\hspace{-2pt}-\hspace{-2pt}\dfrac{34 \rho^4}{R_{1}^5}\hspace{-2pt}+\hspace{-2pt}\dfrac{13 \rho^6}{R_{1}^7} \right),
\quad
 f_{\bm{\rho}}^{\text{U}}(\rho)\hspace{-2pt}=\hspace{-2pt}\dfrac{3{\rho}^2}{R_{1}^{3}}.
  \label{UNrho2} 
\end{align}
}
depending on the scenario under consideration \cite{J:Vaiopoulos}.
To compute $f_{\mathbf{r}|\bm{\rho}}(r|\rho)$, node~2 is assumed to lie on a circle or a spherical surface of radius $r$ centered at node~1. The conditional PDF is determined by the portion of this locus that falls within the annular region or spherical shell defined by radii $R_1$ and $R_2$. Depending on the relative values of $r$, $\rho$, $R_1$, and $R_2$, this intersection region may be fully contained, partially overlapped, or entirely outside the annular region or spherical shell.

Tables \ref{Table1}--\ref{Table3} specify the intervals of $r$ over which the conditional PDF takes distinct forms, together with the corresponding integration limits in \eqref{CondProb}. Overall, five closed-form expressions for $f_{\mathbf{r}|\bm{\rho}}(r|\rho)$ are derived, denoted by $g_i(\rho,r)$ for $i=0,\ldots,4$, as detailed in Appendix~A.
\begin{enumerate}
    \renewcommand{\labelenumi}{\roman{enumi})}
    \item The circle or spherical surface, illustrated by the dashed curve in Tables~\ref{Table1} -- \ref{Table3}, lies entirely outside the annular region or spherical shell defined by radii $R_1$ and $R_2$. In this case, $g_0(\rho,r)=0$.
    
    \item The circle or spherical surface partially intersects the annular region or spherical shell at the inner boundary $R_1$. In this case, $g_1(\rho,r)$ applies.
    
    \item The circle or spherical surface intersects the annular region or spherical shell, overlapping the inner boundary at $R_1$ and extending beyond the outer boundary at $R_2$. In this case, $g_2(\rho,r)$ applies.
    
    \item The circle or spherical surface lies entirely within the annular region or spherical shell. In this case, $g_3(r)$ applies.
    
    \item The circle or spherical surface partially intersects the annular region or spherical shell and extends beyond the outer boundary at $R_2$, without intersecting the inner boundary. In this case, $g_4(\rho,r)$ applies.
\end{enumerate}

\begin{table*}[t]
\centering
\caption{Regime 1: $R_2\leq2R_1$.}
\label{Table1} 
\resizebox{0.95\textwidth}{!}{%
\renewcommand{\arraystretch}{1.0}
\begin{NiceTabular}{c c c c c c}[hvlines]  
    \hline 
    \multicolumn{2}{c} {\small \textbf{Interval I}: $0\leq r\leq R_{2}-R_{1}$} 
    & \multicolumn{3}{c} {\small \textbf{Interval II}: $R_{2}-R_{1}<r\leq R_{1}$}  \\ 

    \small $0\leq\rho \leq R_{1}-r$ 
     & \small $R_{1}-r<\rho\leq R_{1}$  & \small $0\leq\rho\leq R_{1}-r$  
    & \small $R_{1}-r<\rho\leq R_{2}-r$  & \small $R_{2}-r<\rho\leq R_{1}$ \\   

    \includegraphics[keepaspectratio,width=1.2in]{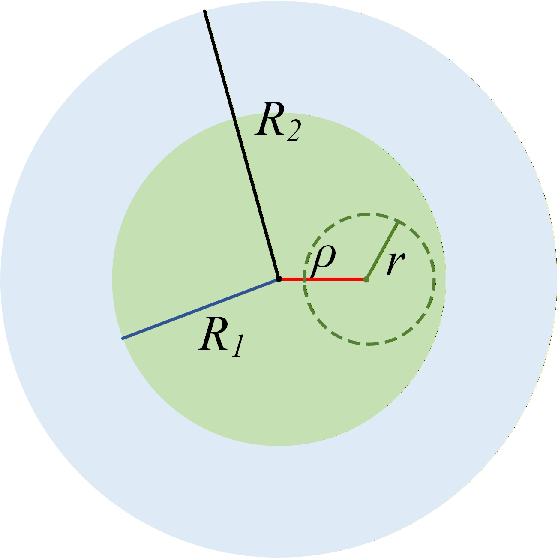}  
    & \includegraphics[keepaspectratio,width=1.2in]{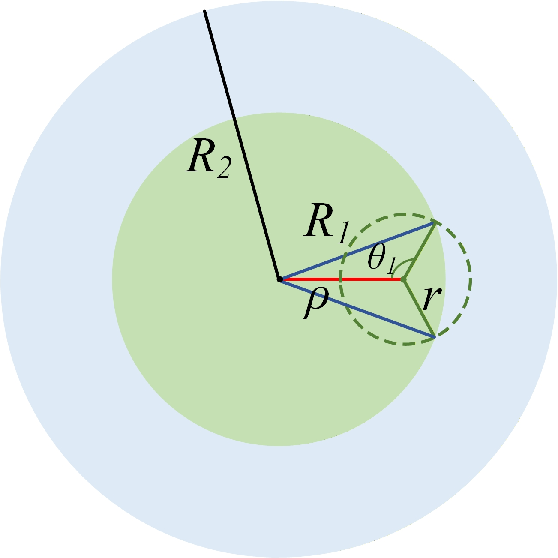}   
    &  \includegraphics[keepaspectratio,width=1.2in]{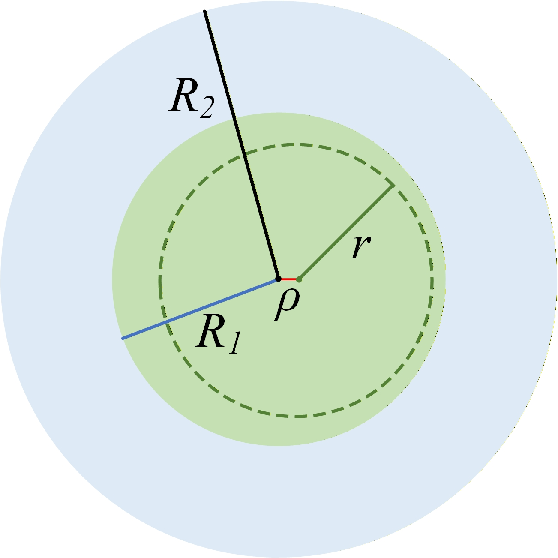}  
    & \includegraphics[keepaspectratio,width=1.2in]{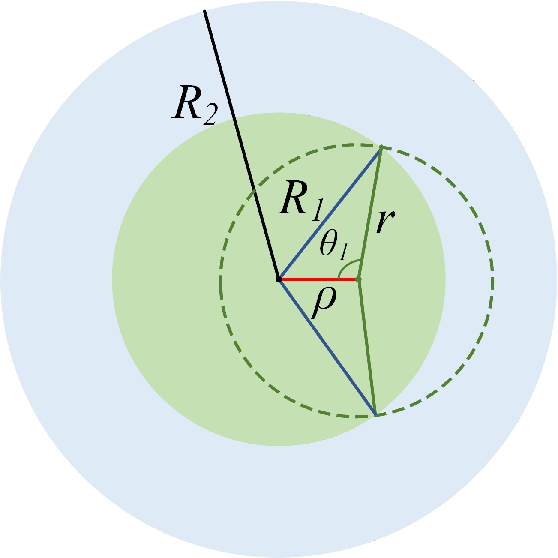}   
    &  \includegraphics[keepaspectratio,width=1.2in]{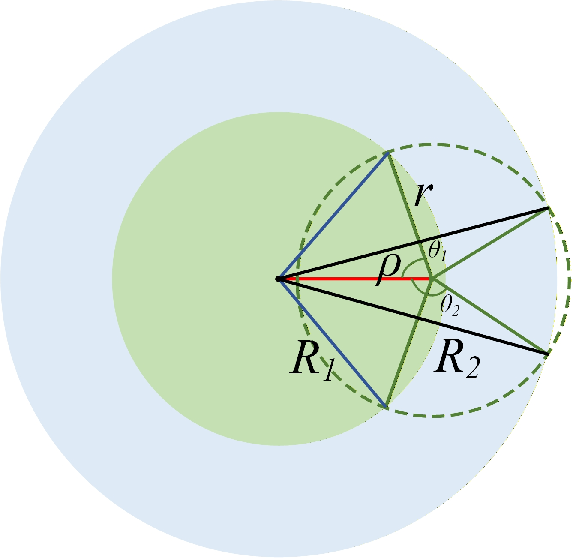}  
        &  \\
         \multicolumn{2}{c} {\small $f_{\mathbf{r}}(r)=\int_{R_{1}-r}^{R_{1}} g_{1}(\rho,r)f_{\rho}(\rho) \,\mathrm{d}\rho$} &
         \multicolumn{3}{c} {\small $f_{\mathbf{r}}(r)=\int_{R_{1}-r}^{R_{2}-r} g_{1}(\rho,r)f_{\rho}(\rho) \,\mathrm{d}\rho+\int_{R_{2}-r}^{R_{1}} g_{2}(\rho,r)f_{\rho}(\rho) \,\mathrm{d}\rho$}  \\
\hline \hline

    \multicolumn{3}{c} {\small \textbf{Interval III}: $R_{1}<r\leq\frac{R_{1}+R_{2}}{2}$} 
    & \multicolumn{3}{c} {\small \textbf{Interval IV}: $\frac{R_{1}+R_{2}}{2}<r\leq R_{2}$}  \\ 
 \small $0\leq\rho\leq r-R_{1}$ 
     & \small $r-R_{1}<\rho\leq R_{2}-r$  & \small $R_{2}-r<\rho\leq R_{1}$  
    & \small $0\leq\rho\leq R_{2}-r$  & \small $R_{2}-r<\rho\leq r-R_{1}$ & \small $r-R_{1}<\rho\leq R_{1}$ \\    
    
 \includegraphics[keepaspectratio,width=1.3in]{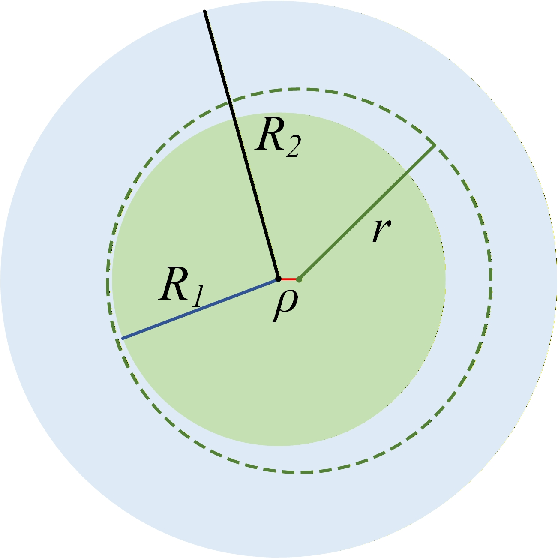}  
    & \includegraphics[keepaspectratio,width=1.3in]{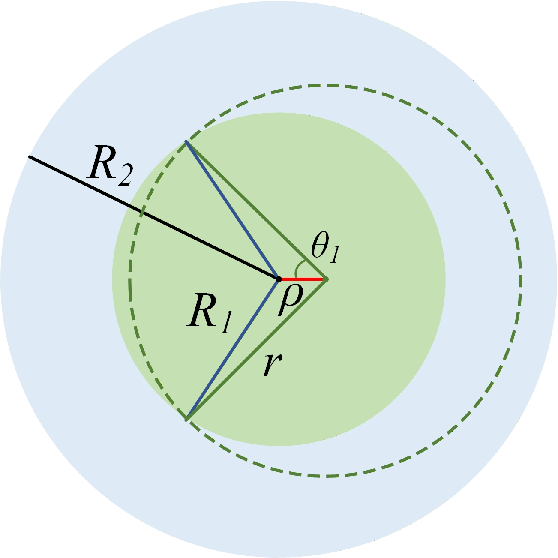}   
    &  \includegraphics[keepaspectratio,width=1.3in]{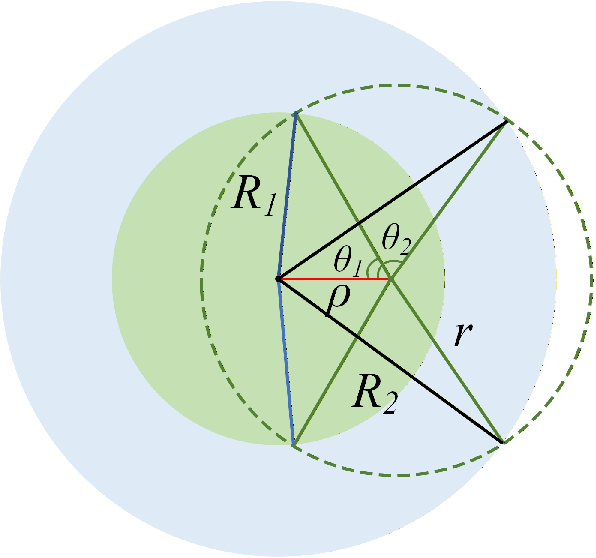}  
    & \includegraphics[keepaspectratio,width=1.3in]{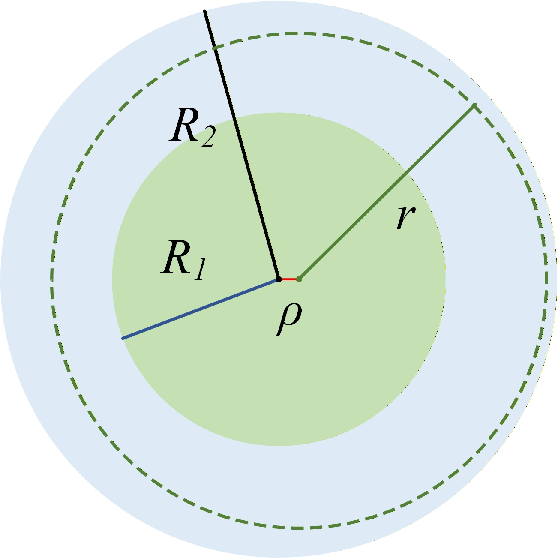}   
    &  \includegraphics[keepaspectratio,width=1.3in]{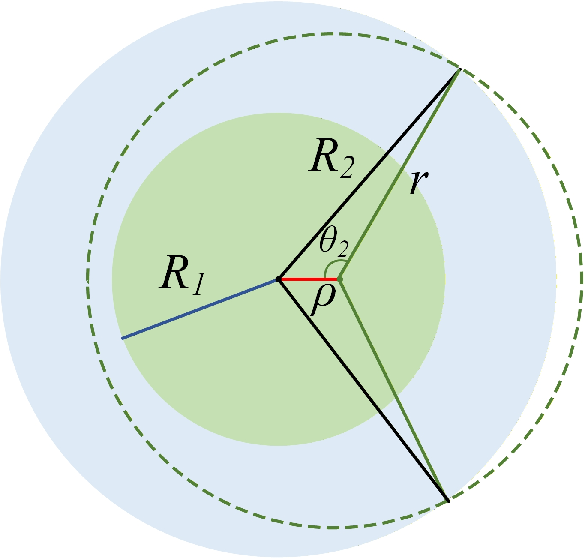}  
        &  \includegraphics[keepaspectratio,width=1.3in]{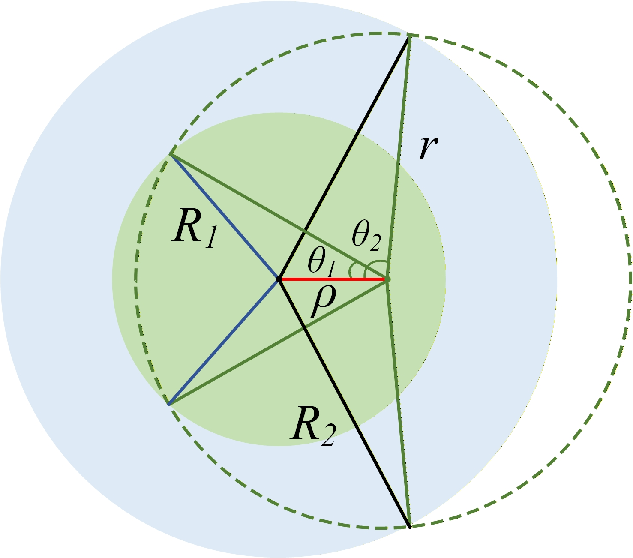}\\
         \multicolumn{3}{c} {\small $f_{\mathbf{r}}(r)=\hspace{-8pt}\int\limits_{0}^{r-R_{1}}\hspace{-8pt} g_{3}(r)f_{\rho}(\rho) \,\mathrm{d}\rho+\hspace{-8pt}\int\limits_{r-R_{1}}^{R_{2}-r}\hspace{-8pt} g_{1}(\rho,r)f_{\rho}(\rho) \,\mathrm{d}\rho+\hspace{-8pt}\int\limits_{R_{2}-r}^{R_{1}}\hspace{-8pt} g_{2}(\rho,r)f_{\rho}(\rho) \,\mathrm{d}\rho$} &  \multicolumn{3}{c} {\small $f_{\mathbf{r}}(r)=\hspace{-8pt}\int\limits_{0}^{R_{2}-r}\hspace{-8pt} g_{3}(r)f_{\rho}(\rho) \,\mathrm{d}\rho+\hspace{-8pt}\int\limits_{R_{2}-r}^{r-R_{1}}\hspace{-8pt} g_{4}(\rho,r)f_{\rho}(\rho) \,\mathrm{d}\rho+\hspace{-8pt}\int\limits_{r-R_{1}}^{R_{1}}\hspace{-6pt} g_{2}(\rho,r)f_{\rho}(\rho) \,\mathrm{d}\rho$}\\
\hline \hline

    \multicolumn{3}{c} {\small \textbf{Interval V}: $R_{2}<r\leq2R_{1}$} 
    & \multicolumn{2}{c} {\small \textbf{Interval VI}: $2R_{1}<r\leq R_{1}+R_{2}$}  \\ 
 \small $0\leq\rho\leq r-R_{2}$ 
     & \small $r-R_{2}<\rho\leq r-R_{1}$  & \small $r-R_{1}<\rho\leq R_{1}$  
    & \small $0\leq\rho\leq r-R_{2}$  & \small $r-R_{2}<\rho\leq R_{1}$  \\    
    
 \includegraphics[keepaspectratio,width=1.2in]{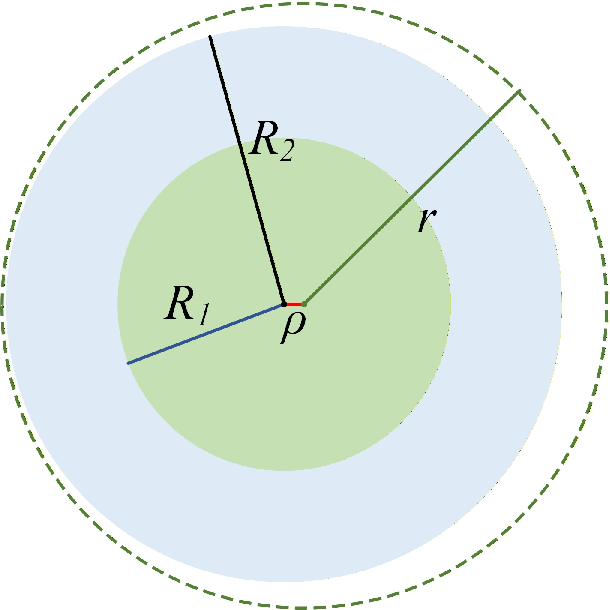}  
    & \includegraphics[keepaspectratio,width=1.2in]{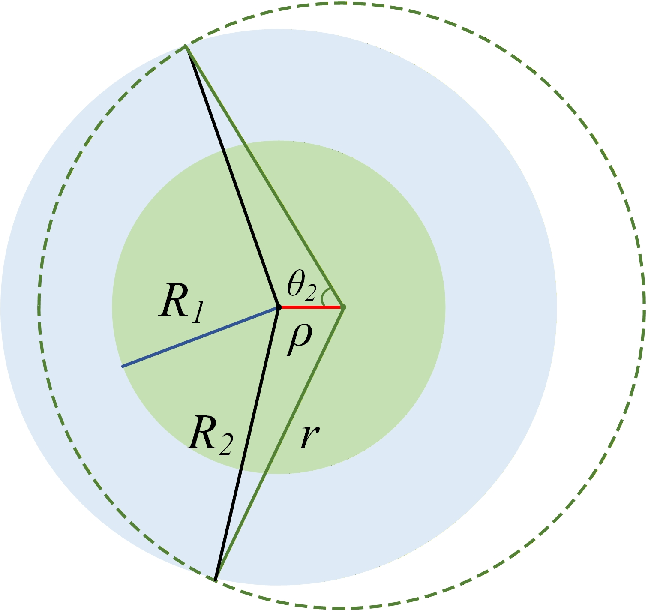}   
    &  \includegraphics[keepaspectratio,width=1.4in]{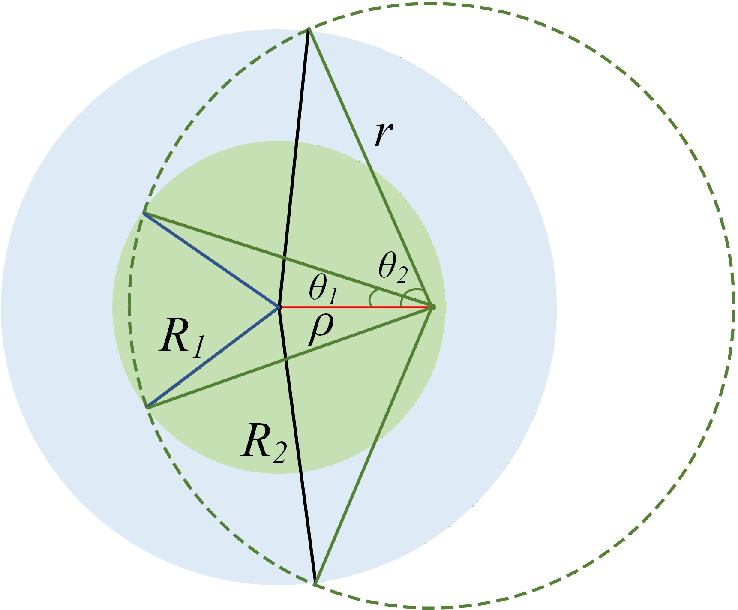} 
    & \includegraphics[keepaspectratio,width=1.2in]{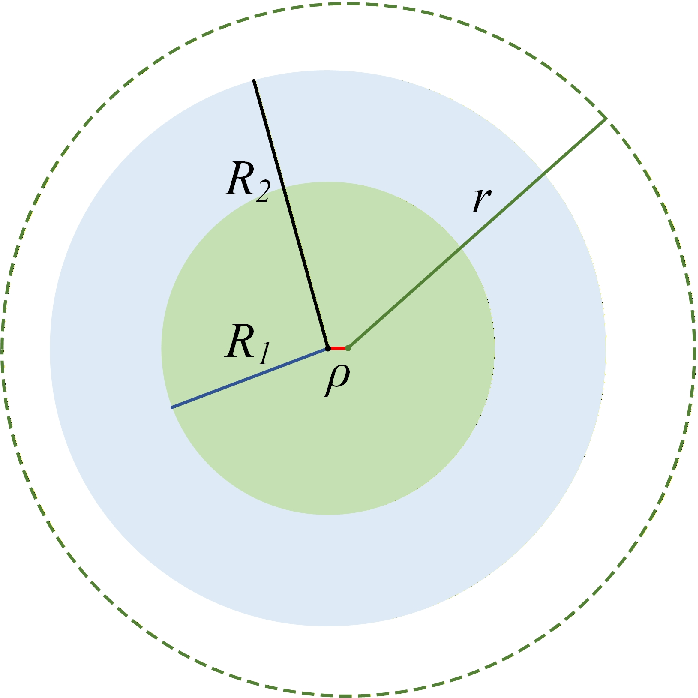}   
    &  \includegraphics[keepaspectratio,width=1.2in]{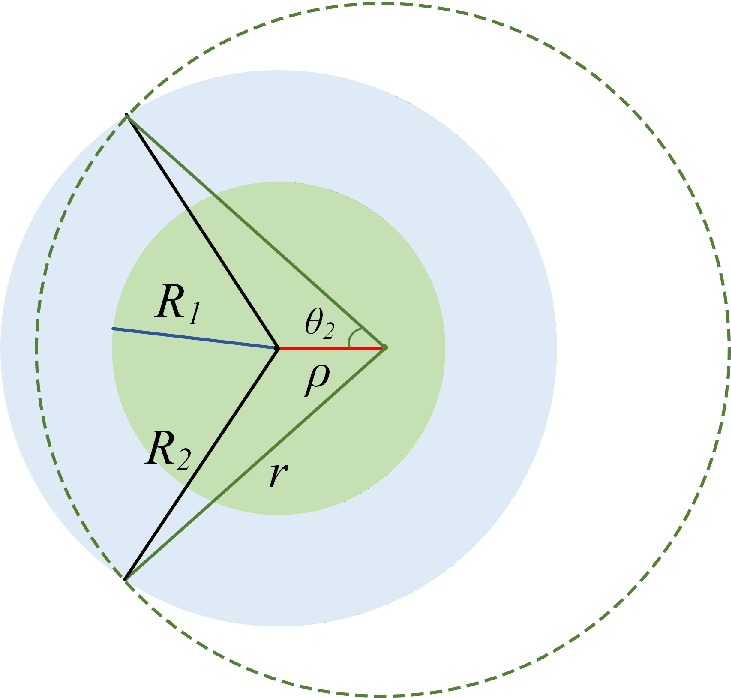}  
        &  \\
\multicolumn{3}{c} {\small $f_{\mathbf{r}}(r)=\int_{r-R_{2}}^{r-R_{1}} g_{4}(\rho,r)f_{\rho}(\rho) \,\mathrm{d}\rho+\int_{r-R_{1}}^{R_{1}} g_{2}(\rho,r)f_{\rho}(\rho) \,\mathrm{d}\rho$} &  \multicolumn{2}{c} {\small $f_{\mathbf{r}}(r)=\int_{r-R_{2}}^{R_{1}} g_{4}(\rho,r)f_{\rho}(\rho) \,\mathrm{d}\rho$}\\
\hline
\end{NiceTabular}
}
\end{table*}

\begin{table*}[t]
\centering
\caption{Regime 2: $2R_1<R_2\leq3R_1$.}
\label{Table2} 
\resizebox{0.9\textwidth}{!}{%
\renewcommand{\arraystretch}{1.6}
\begin{NiceTabular}{c c c c c c c}[hvlines]  
\hline 

\multicolumn{2}{c}{\large \textbf{Interval I}: $0\leq r\leq R_{1}$} 
& \multicolumn{2}{c}{\large \textbf{Interval II}: $R_{1}<r\leq R_{2}-R_{1}$} 
& \multicolumn{3}{c}{\large \textbf{Interval III}: $R_{2}-R_{1}<r\leq\frac{R_{1}+R_{2}}{2}$} \\ 

\large $0\leq\rho\leq R_{1}-r$ 
& \large $R_{1}-r<\rho\leq R_{1}$  
& \large $0\leq\rho\leq r-R_{1}$  
& \large $r-R_{1}<\rho\leq R_{1}$  
& \large $0\leq\rho\leq r-R_{1}$ 
& \large $r-R_{1}<\rho\leq R_{2}-r$  
& \large $R_{2}-r<\rho\leq R_{1}$ \\    

\includegraphics[width=1.3in]{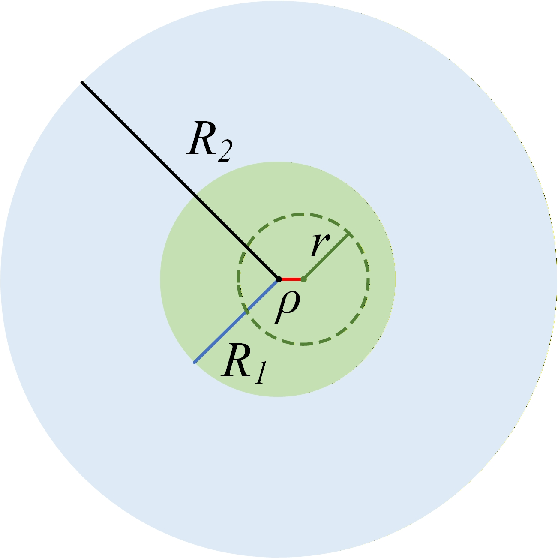}  
& \includegraphics[width=1.3in]{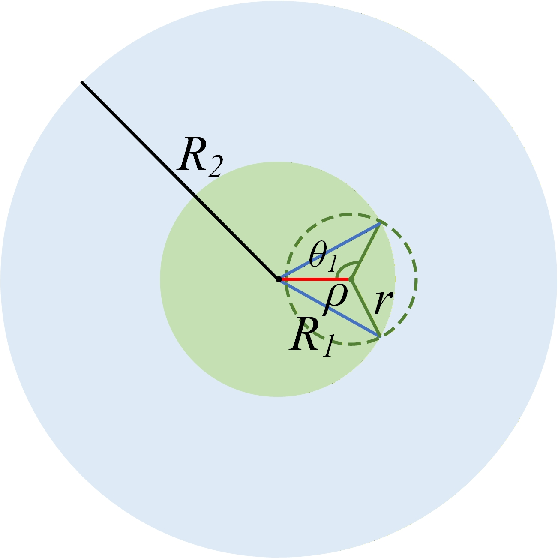}   
& \includegraphics[width=1.3in]{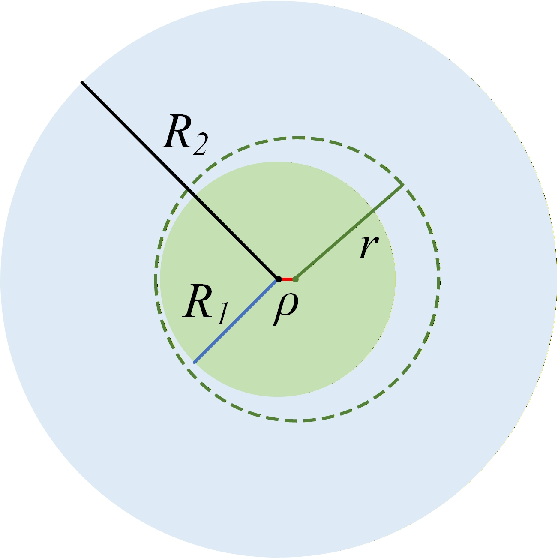}  
& \includegraphics[width=1.3in]{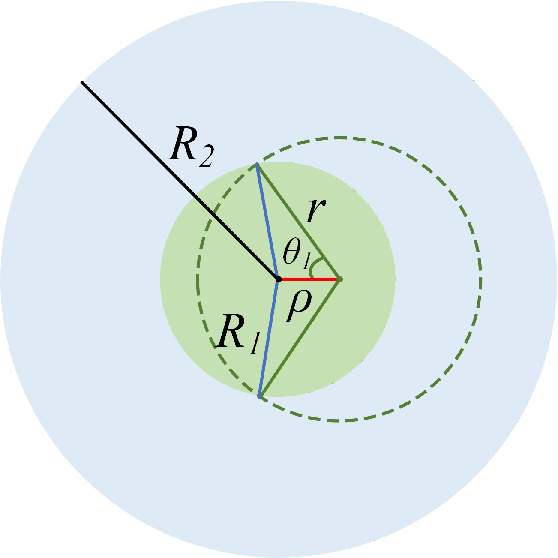}   
& \includegraphics[width=1.3in]{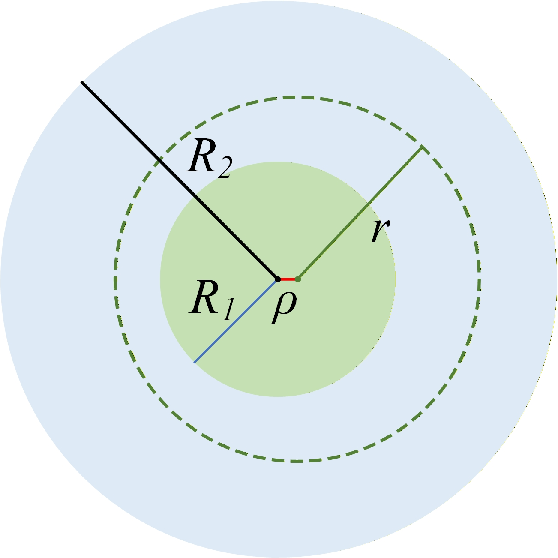}  
& \includegraphics[width=1.3in]{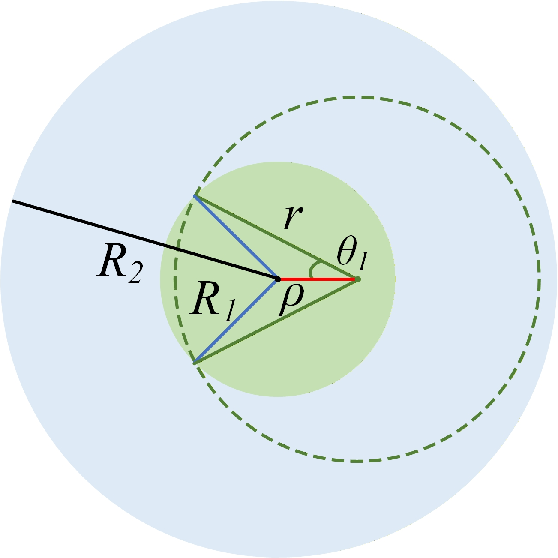}   
& \includegraphics[width=1.3in]{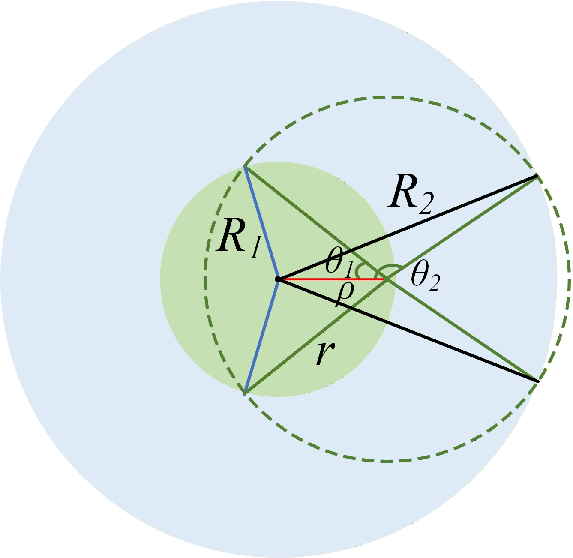} \\

\multicolumn{2}{c}{\large 
$f_{\mathbf{r}}(r)=\hspace{-8pt}\int\limits_{R_{1}-r}^{R_{1}} g_{1}(\rho,r)f_{\rho}(\rho)\,\mathrm{d}\rho$} 
& \multicolumn{2}{c}{\large 
$f_{\mathbf{r}}(r)=\hspace{-8pt}\int\limits_{0}^{r-R_{1}}\hspace{-8pt} g_{3}(r)f_{\rho}(\rho)\,\mathrm{d}\rho
+\hspace{-8pt}\int\limits_{r-R_{1}}^{R_{1}}\hspace{-6pt} g_{1}(\rho,r)f_{\rho}(\rho)\,\mathrm{d}\rho$} 
& \multicolumn{3}{c}{\large 
$f_{\mathbf{r}}(r)=\int\limits_{0}^{r-R_{1}}\hspace{-8pt} g_{3}(r)f_{\rho}(\rho)\,\mathrm{d}\rho
+\hspace{-8pt}\int\limits_{r-R_{1}}^{R_{2}-r}\hspace{-8pt} g_{1}(\rho,r)f_{\rho}(\rho)\,\mathrm{d}\rho
+\hspace{-8pt}\int\limits_{R_{2}-r}^{R_{1}}\hspace{-8pt} g_{2}(\rho,r)f_{\rho}(\rho)\,\mathrm{d}\rho$} \\

\hline \hline

\multicolumn{3}{c}{\large \textbf{Interval IV}: $\frac{R_{1}+R_{2}}{2}<r\leq2R_{1}$} 
& \multicolumn{2}{c}{\large \textbf{Interval V}: $2R_{1}<r\leq R_{2}$} 
& \multicolumn{2}{c}{\large \textbf{Interval VI}: $R_{2}<r\leq R_{1}+R_{2}$} \\ 

\large $0\leq\rho\leq R_{2}-r$ 
& \large $R_{2}-r<\rho\leq r-R_{1}$ 
& \large $r-R_{1}<\rho\leq R_{1}$ 
& 
\large $0\leq\rho\leq R_{2}-r$ 
& \large $R_{2}-r<\rho\leq R_{1}$ 
& 
\large $0\leq\rho\leq r-R_{2}$ 
& \large $r-R_{2}<\rho\leq R_{1}$ \\    

\includegraphics[width=1.3in]{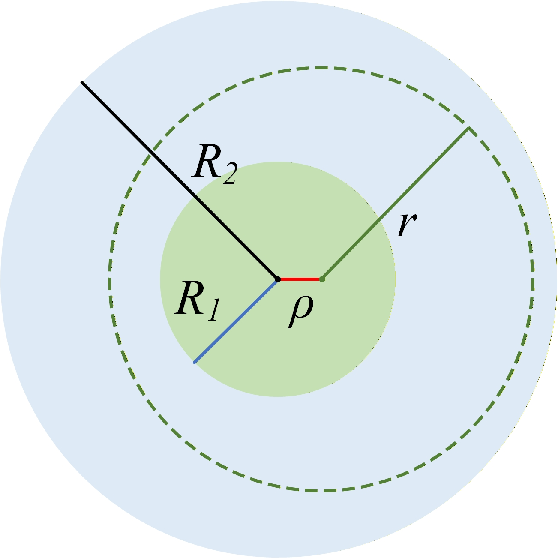}   
& \includegraphics[width=1.3in]{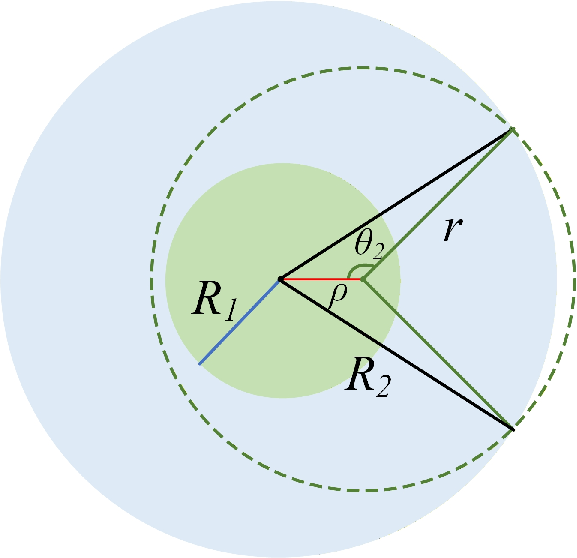}  
& \includegraphics[width=1.3in]{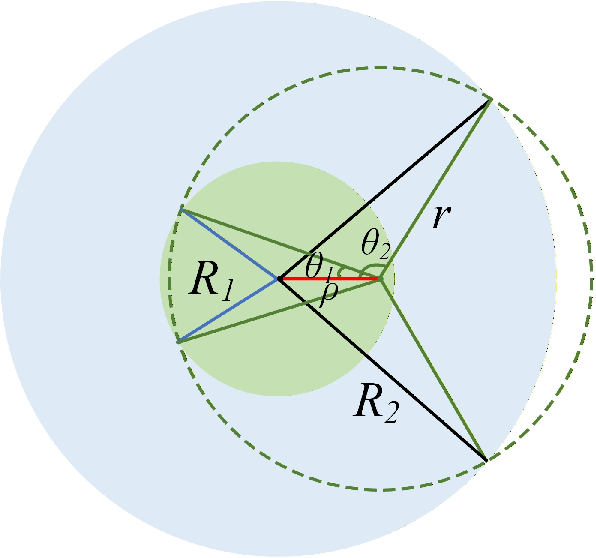}  
& \includegraphics[width=1.3in]{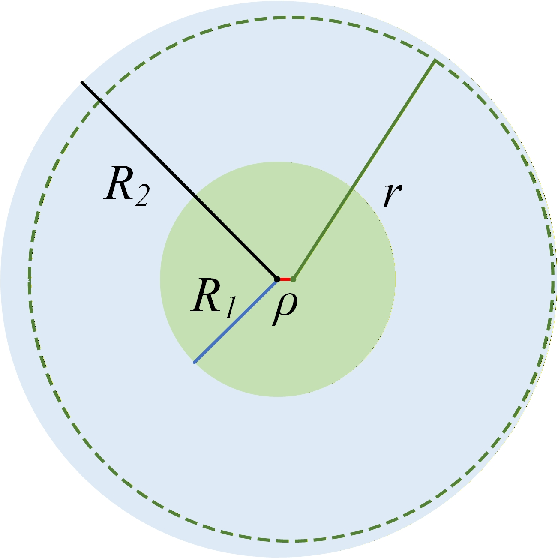}  
& \includegraphics[width=1.3in]{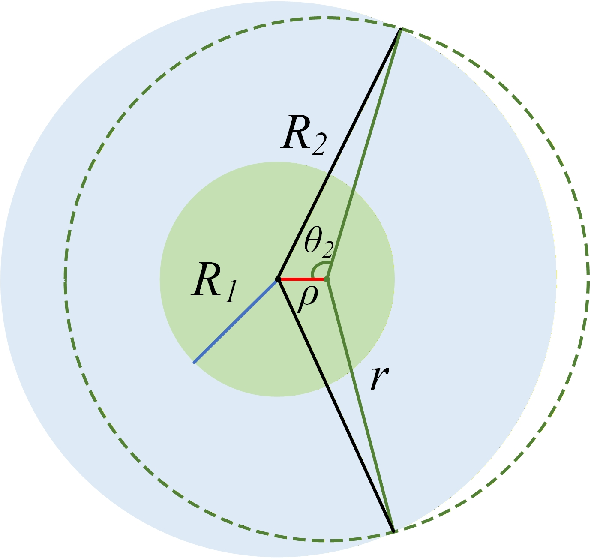}   
& \includegraphics[width=1.3in]{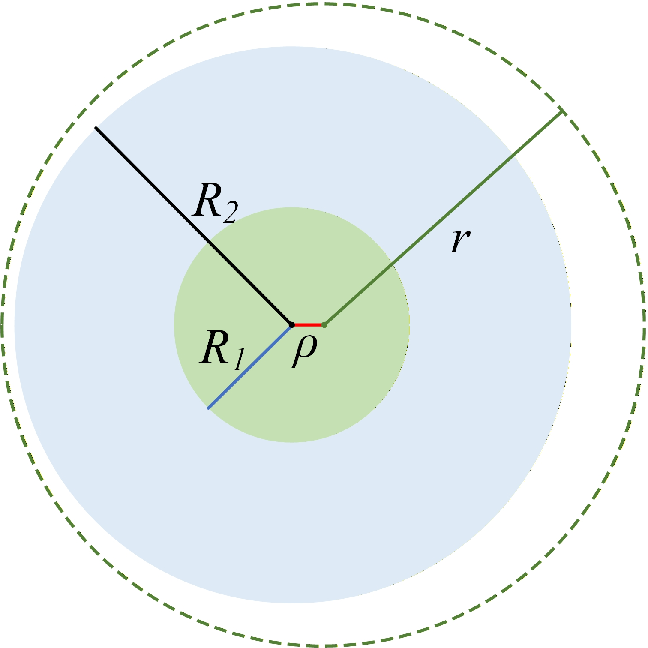} 
& \includegraphics[width=1.3in]{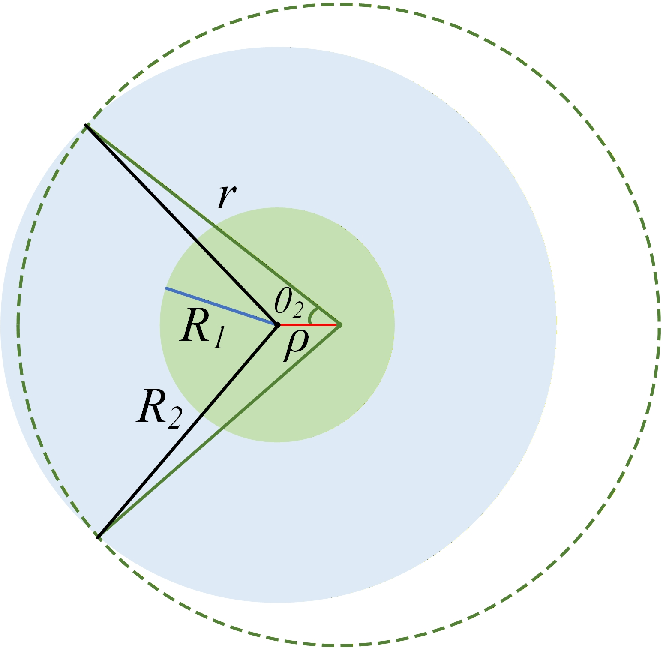} \\

\multicolumn{3}{c}{\large 
$f_{\mathbf{r}}(r)\hspace{-3pt}=\hspace{-8pt}\int\limits_{0}^{R_{2}-r}\hspace{-8pt}
 g_{3}(r)f_{\rho}(\rho)\,\mathrm{d}\rho
+\hspace{-8pt}\int\limits_{R_{2}-r}^{r-R_{1}}\hspace{-8pt} g_{4}(\rho,r)f_{\rho}(\rho)\,\mathrm{d}\rho
+\hspace{-8pt}\int\limits_{r-R_{1}}^{R_{1}}\hspace{-8pt} g_{2}(\rho,r)f_{\rho}(\rho)\,\mathrm{d}\rho$}
& \multicolumn{2}{c}{\large 
$f_{\mathbf{r}}(r)=\hspace{-8pt}\int\limits_{0}^{R_{2}-r}\hspace{-8pt} g_{3}(r)f_{\rho}(\rho)\,\mathrm{d}\rho
+\hspace{-8pt}\int\limits_{R_{2}-r}^{R_{1}}\hspace{-6pt} g_{4}(\rho,r)f_{\rho}(\rho)\,\mathrm{d}\rho$} 
& \multicolumn{2}{c}{\large 
$f_{\mathbf{r}}(r)=\hspace{-8pt}\int\limits_{r-R_{2}}^{R_{1}} g_{4}(\rho,r)f_{\rho}(\rho)\,\mathrm{d}\rho$} \\

\hline
\end{NiceTabular}
}
\end{table*}

\begin{table*}[th!]
\centering
\caption{Regime 3: $R_2>3R_1$.}
\label{Table3} 
\resizebox{0.9\textwidth}{!}{%
\renewcommand{\arraystretch}{1.0}
\begin{NiceTabular}{c c c c c}[hvlines]  
    \hline 
    \multicolumn{2}{c} {\small \textbf{Interval I}: $0\leq r\leq R_{1}$} 
    & \multicolumn{2}{c} {\small \textbf{Interval II}: $R_{1}<r\leq2R_{1}$} 
    &  {\small \textbf{Interval III}: $2R_{1}<r\leq R_{2}-R_{1}$}  \\ 

    \small $0\leq\rho\leq R_{1}-r$ 
     & \small $R_{1}-r<\rho\leq R_{1}$  & \small $0\leq\rho\leq r-R_{1}$  
    & \small $r-R_{1}<\rho\leq R_{1}$  & \small $0\leq\rho\leq R_{1}$\\   

    \includegraphics[keepaspectratio,width=1.3in]{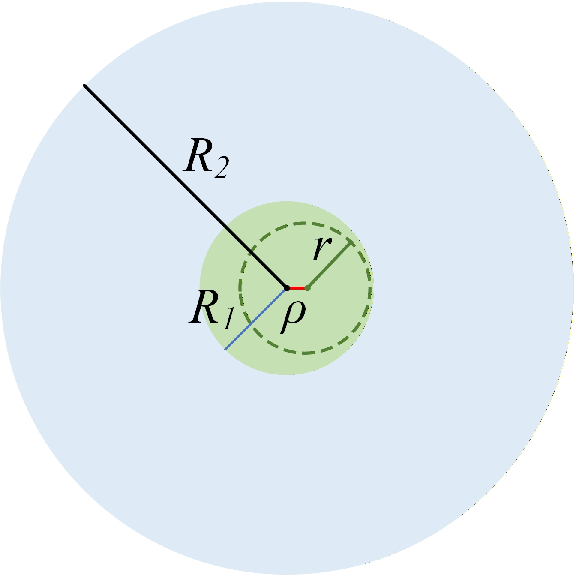}  
    & \includegraphics[keepaspectratio,width=1.3in]{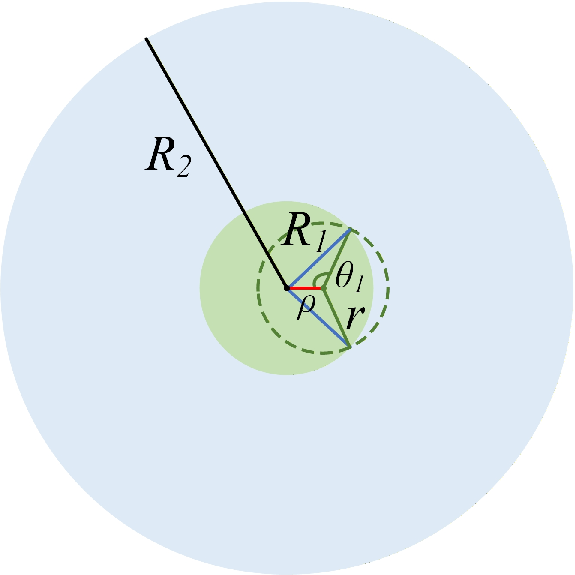}   
    &  \includegraphics[keepaspectratio,width=1.3in]{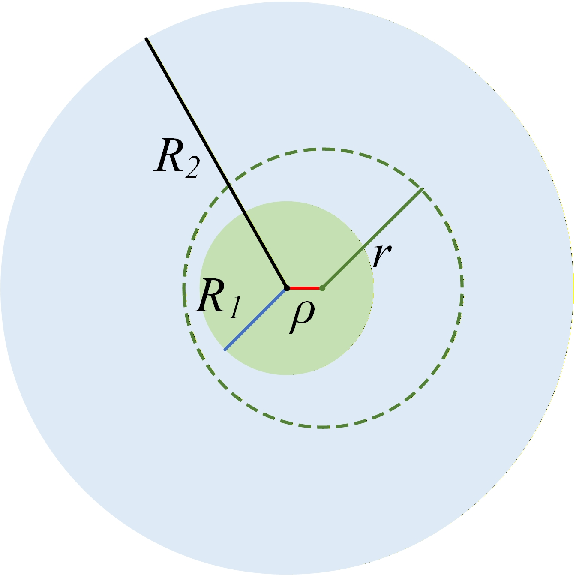}  
    & \includegraphics[keepaspectratio,width=1.3in]{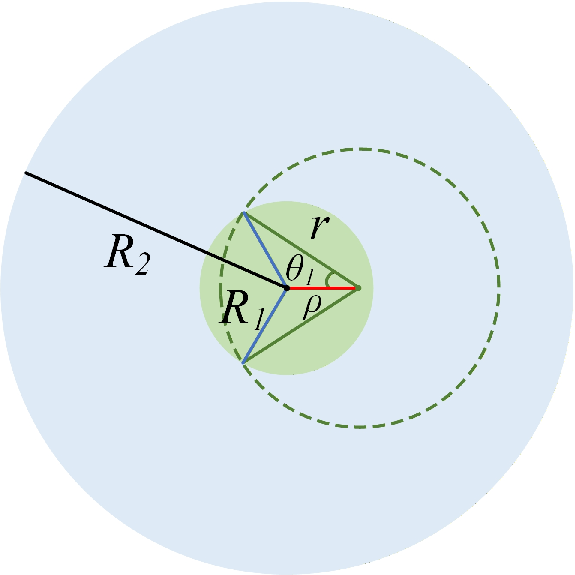}   
    & \includegraphics[keepaspectratio,width=1.3in]{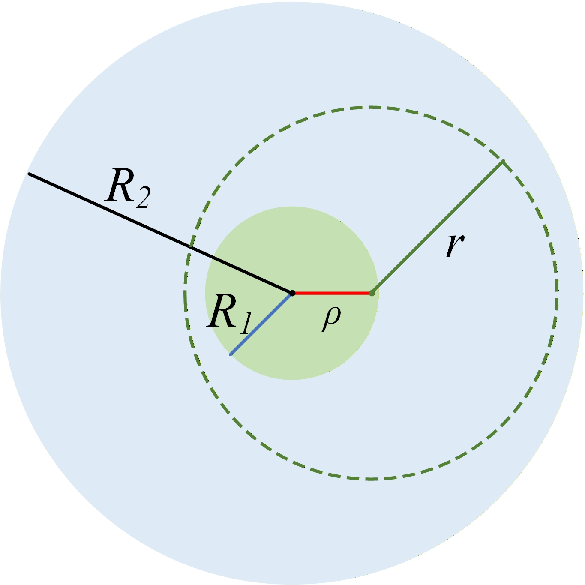}    \\
   \multicolumn{2}{c} {\small $f_{\mathbf{r}}(r)=\hspace{-8pt}\int\limits_{R_{1}-r}^{R_{1}}\hspace{-6pt} g_{1}(\rho,r)f_{\rho}(\rho) \,\mathrm{d}\rho$} &  \multicolumn{2}{c} {\small $f_{\mathbf{r}}(r)=\hspace{-8pt}\int\limits_{0}^{r-R_{1}}\hspace{-8pt}g_{3}(r)f_{\rho}(\rho) \,\mathrm{d}\rho+\hspace{-8pt}\int\limits_{r-R_{1}}^{R_{1}}\hspace{-6pt} g_{1}(\rho,r)f_{\rho}(\rho) \,\mathrm{d}\rho$}  & \small $f_{\mathbf{r}}(r)=\hspace{-3pt}\int\limits_{0}^{R_{1}}\hspace{-5pt}g_{3}(r)f_{\rho}(\rho) \,\mathrm{d}\rho$\\
\hline \hline

    \multicolumn{2}{c} {\small \textbf{Interval IV}: $R_{2}-R_{1}<r\leq R_{2}$} 
    & \multicolumn{2}{c} {\small \textbf{Interval V}: $R_{2}<r\leq R_{1}+R_{2}$}  \\ 
 \small $0\leq\rho\leq R_{2}-r$ 
     & \small $R_{2}-r<\rho\leq R_{1}$  & \small $0\leq\rho\leq r-R_{2}$  & \small $r-R_{2}<\rho\leq R_{1}$ &  \\    
    
 \includegraphics[keepaspectratio,width=1.3in]{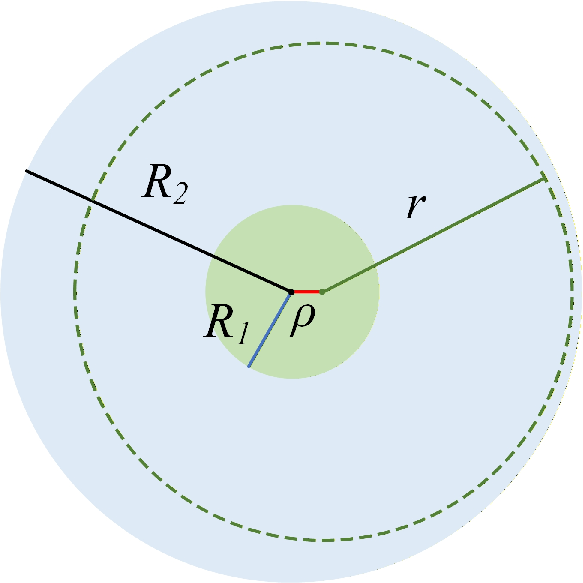}  
    & \includegraphics[keepaspectratio,width=1.3in]{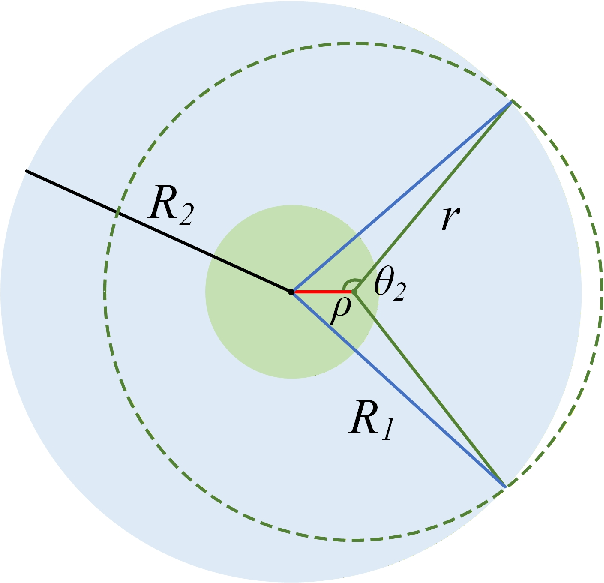}   
    &  \includegraphics[keepaspectratio,width=1.3in]{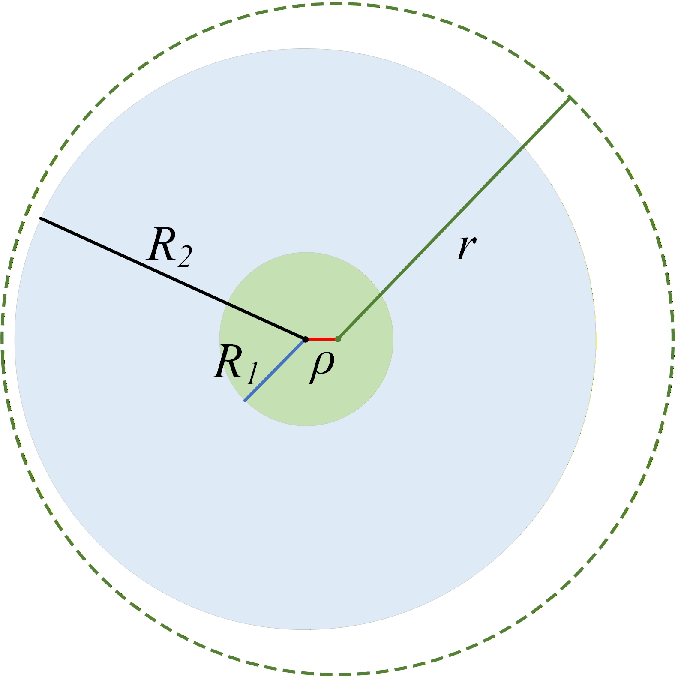}  
    & \includegraphics[keepaspectratio,width=1.3in]{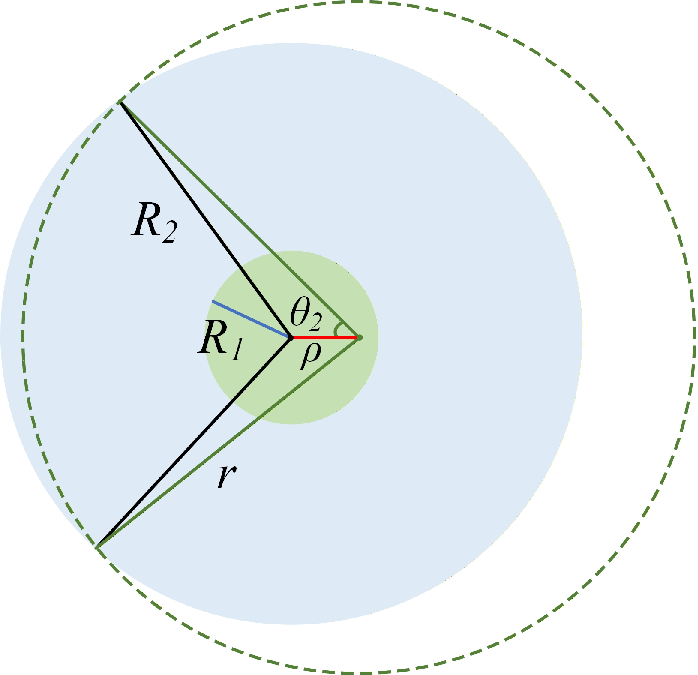}   
    &    
        \\
        \multicolumn{2}{c} {\small $f_{\mathbf{r}}(r)=\hspace{-8pt}\int\limits_{0}^{R_{2}-r}\hspace{-8pt} g_{3}(r)f_{\rho}(\rho) \,\mathrm{d}\rho+\hspace{-8pt}\int\limits_{R_{2}-r}^{R_{1}}\hspace{-6pt} g_{4}(\rho,r)f_{\rho}(\rho) \,\mathrm{d}\rho$} &  \multicolumn{2}{c} {\small $f_{\mathbf{r}}(r)=\hspace{-8pt}\int\limits_{r-R_{2}}^{R_{1}}\hspace{-5pt} g_{4}(\rho,r)f_{\rho}(\rho) \,\mathrm{d}\rho$}\\
\hline 
\end{NiceTabular}
}
\end{table*}

\section{Internodal Distance Distributions}
This section derives the internodal distance distributions for 2-D and 3-D networks via a set of theorems.

\begin{table}[t]
\centering
\scriptsize
\setlength{\tabcolsep}{5pt}
\renewcommand{\arraystretch}{1.4}
\caption{Coefficients for 2-D geometry}
\label{2DTable}
\begin{tabular}{@{}c@{\hspace{4pt}} c c c}
\toprule
 & \textbf{Coeff.} & \textbf{$s_1$} & \textbf{$s_2$} \\
\midrule
\multirow{8}{*}{\rotatebox{90}{$\boldsymbol{R_2 \leq 3R_1}$}}
 & $q_1(r)$           & \cellcolor{rowalt}$r$       & \cellcolor{rowalt}$-\frac{r(2R_1^2+r^2)}{2R_1^2}$ \\[4pt]
 & $q_2(r)$           & $-2R_1$                     & $-\frac{2(R_1^2-r^2)}{R_1}$ \\[4pt]
 & $q_3(r)$           & \cellcolor{rowalt}$rR_1$    & \cellcolor{rowalt}$-\frac{r(2R_1^2+r^2)}{2R_1}$ \\[4pt]
 & $q_4(r)$           & $-2R_1^2$                   & $-2(R_1^2-r^2)$ \\[4pt]
 & $q_5(r),q_9(r)$    & \cellcolor{rowalt}$-rR_2$   & \cellcolor{rowalt}$0$ \\[4pt]
 & $q_6(r),q_{10}(r)$ & $R_2^2$                     & $-\frac{R_2^2(2r^2-2R_1^2+R_2^2)}{R_1^2}$ \\[4pt]
 & $q_7(r),q_{11}(r)$ & \cellcolor{rowalt}$R_1^2$   & \cellcolor{rowalt}$R_1^2$ \\[4pt]
 & $q_8(r),q_{12}(r)$ & $0$                         & $\frac{r^2-3R_1^2+5R_2^2}{4R_1^2}$ \\[2pt]
\specialrule{0.8pt}{3pt}{3pt}
\multirow{6}{*}{\rotatebox{90}{$\boldsymbol{R_2 > 3R_1}$}}
 & $p_1(r)$  & \cellcolor{rowalt}$r$      & \cellcolor{rowalt}$-\frac{r(2R_1^2+r^2)}{2R_1^2}$ \\[4pt]
 & $p_2(r)$  & $-2R_1$                    & $-\frac{2(R_1^2-r^2)}{R_1}$ \\[4pt]
 & $p_3(r)$  & \cellcolor{rowalt}$-rR_2$  & \cellcolor{rowalt}$0$ \\[4pt]
 & $p_4(r)$  & $R_2^2$                    & $\frac{R_2^2(2R_1^2-2r^2-R_2^2)}{R_1^2}$ \\[4pt]
 & $p_5(r)$  & \cellcolor{rowalt}$R_1^2$  & \cellcolor{rowalt}$R_1^2$ \\[4pt]
 & $p_6(r)$  & $0$                        & $\frac{r^2-3R_1^2+5R_2^2}{4R_1^2}$ \\[2pt]
\bottomrule
\end{tabular}
\end{table}

\subsection{2-D networks}
\begin{theorem}
\label{pdf2Da} 
The PDF expression for the internodal distance distribution, $r$, in a 2-D network with $R_{2}\leq 3R_{1}$, for both $s_1$ and $s_2$, is given by
{\small
\begin{equation}
\label{Gen2D}
\hspace{-2pt}f_{\mathbf{r}}(r)\hspace{-3pt}=\hspace{-4pt}
\begin{cases}
\frac{2 r} {\mathrm{S}R_1}\Bigl(\displaystyle\pi R_1\hspace{-3pt}+\hspace{-3pt}q_1(r) \sin(\phi_0(r))+\\
q_2(r)\phi_0(r)\Bigr), & \hspace{-8pt}0\le r\le R_2-R_1,\\
\frac{2 r} {\mathrm{S}R_1^2}\hspace{-2pt}\Bigl(\hspace{-2pt}\displaystyle
q_3(r) \sin(\phi_0(r))\hspace{-3pt}+\hspace{-3pt}q_4(r)\phi_0(r)+\\ q_5(r) \sin(\phi_1(r))\hspace{-3pt}+\hspace{-3pt}q_6(r) \phi_1(r)+\\
q_7(r)\phi_2(r)+q_8(r)k(r) \Bigr), & \hspace{-8pt}R_2-R_1 <  r \leq 2 R_1,\\
\frac{2 r} {\mathrm{S}R_1^2}\hspace{-2pt}\Bigl(\hspace{-2pt}\displaystyle
q_9(r) \sin(\phi_1(r))\hspace{-3pt}+\hspace{-3pt}q_{10}(r) \phi_1(r)+\hspace{-3pt}\\ q_{11}(r) \phi_2(r) 
+ q_{12}(r) k(r) \Bigr), & \hspace{-8pt}2 R_1 < r \le R_1\hspace{-3pt}+\hspace{-3pt}R_2.
\end{cases}
\end{equation}
}where $\phi_0(r)\hspace{-3pt}=\hspace{-3pt}\theta_1(R_1,r)$, $\phi_1(r)\hspace{-3pt}=\hspace{-3pt}\theta_1(R_2,r)$, $\phi_2(r)\hspace{-3pt}=\hspace{-3pt}\theta_2(R_1,r)$, $\theta_1(\rho,r)\hspace{-3pt}=\hspace{-3pt}\arccos\left({\frac{\rho^2+r^2-R_1^2}{2r\rho}}\right)$,   $\theta_2(\rho,r)\hspace{-3pt}=\hspace{-3pt}\arccos\left({\frac{\rho^2+r^2-R_2^2}{2r\rho}}\right)$, $k(r) = \sqrt{(r^2 - (R_2 - R_1)^2)((R_1 + R_2)^2 - r^2)}$ and $\mathrm{S}\hspace{-3pt}=\hspace{-3pt}\pi(R_2^2\hspace{-3pt}-\hspace{-3pt}R_1^2)$ are system dependent parameters and $q_{1}(r)-q_{12}(r)$ are tabulated in Table \ref{2DTable}.
\end{theorem}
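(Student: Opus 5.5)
We follow the route laid out by \eqref{CondProb} and Tables~\ref{Table1}--\ref{Table2}. The first step is to write the conditional densities $g_i(\rho,r)$ explicitly. Node~2 is uniform on the annulus, so it has density $1/\mathrm{S}$ there. Given node~1 at distance $\rho$, the density of $r$ is $r/\mathrm{S}$ times the angular measure of the circle of radius $r$ (centered at node~1) that lies inside the annulus. Point $\rho+r e^{i\psi}$ has squared modulus $\rho^2+r^2+2r\rho\cos\psi$. Hence the arc outside the inner disk is $2\pi-2\theta_1(\rho,r)$, measured via $\pi-\psi$, and the arc inside the outer disk is $2\theta_2(\rho,r)$. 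This gives
\[
g_3(r)=\frac{2\pi r}{\mathrm{S}},\qquad g_1=\frac{2r}{\mathrm{S}}\bigl(\pi-\theta_1(\rho,r)\bigr),\qquad g_4=\frac{2r}{\mathrm{S}}\,\theta_2(\rho,r),
\]
\[
g_2=\frac{2r}{\mathrm{S}}\bigl(\theta_2(\rho,r)-\theta_1(\rho,r)\bigr).
\]
(In $g_1,g_2$ the $\theta_1$ terms should be read with the appropriate sign convention.) These match the five cases listed after \eqref{UNrho2}.

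Second, we substitute into the interval decompositions. For each interval of Tables~\ref{Table1} and \ref{Table2}, $f_{\mathbf r}(r)$ is a sum of integrals $\int \rho^m\theta_j(\rho,r)\,d\rho$ with $m\in\{1,3\}$, plus elementary pieces $\int g_3 f_\rho$. The key primitives are $\int \rho\,\theta_j\,d\rho$ and $\int\rho^3\theta_j\,d\rho$, which we obtain by integration by parts. Differentiating the arccos produces algebraic terms of the form $\rho^{m+1}\partial_\rho(\cdot)/\sqrt{1-c^2}$. After the substitution $u=\rho^2$, these reduce to integrals of rational functions times $1/\sqrt{\text{quadratic in }u}$. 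Such integrals evaluate to arcsin/arccos terms in the other angles plus square roots.

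At the breakpoint limits $\rho=|R_1-r|$, $|R_2-r|$, $0$, $R_1$, the arccos arguments take the values $\pm1$, so many terms vanish or become multiples of $\pi$. At $\rho=R_1$ we get $\theta_1(R_1,r)=\phi_0$ and $\theta_2(R_1,r)=\phi_2$. The square root appearing at $\rho=R_1$ for $\theta_2$ is $\sqrt{4r^2R_1^2-(R_1^2+r^2-R_2^2)^2}=k(r)$, which explains the $k(r)$ terms. The $\phi_1=\theta_1(R_2,r)$ terms arise from the inner integrals in $\rho$ after the change of variables: the lower/upper limits in $\rho$ map to angles with a law-of-cosines argument involving $R_2$. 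We handle this by writing the boundary contributions of the $u$-integral in terms of the angle opposite to the respective side. The $\sin\phi$ terms come from the same square roots, rewritten as $2r\rho\sin\theta$.

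Third, we merge intervals. We verify that on $0\le r\le R_2-R_1$ (Interval I in both regimes, together with Interval II of Regime~2, where $g_3$ appears for $\rho<r-R_1$), the combined expression reduces to the first branch. Contributions from $\theta$ evaluated at degenerate limits cancel, leaving only $\phi_0$. Likewise, Intervals II--V (Regime~1) and III--IV (Regime~2) must all collapse to the second branch, and Interval VI in both regimes to the third. We check this by showing that the different splitting points ($R_2-r$ versus $r-R_1$) only change which primitive is evaluated at a point where the $\theta$'s take the values $0$ or $\pi$. The resulting jump terms cancel between adjacent $g_1/g_2/g_4$ pieces, because $g_2-g_1=g_4-g_3$ identically. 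This identity is the structural reason the regimes unify.

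The main obstacle is the primitive $\int\rho^3\theta_2(\rho,r)\,d\rho$ in scenario $s_2$, together with the bookkeeping of branches of arccos/arcsin so that everything reduces to the principal angles $\phi_0,\phi_1,\phi_2$. We would first obtain the antiderivative symbolically and differentiate it back to verify it. We would then check numerically, against a Monte Carlo simulation, that the coefficients $q_i$ of Table~\ref{2DTable} are reproduced and that $\int f_{\mathbf r}=1$.
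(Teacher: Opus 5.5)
Your proposal is correct and follows the paper's route: you take the conditional densities $g_1,\dots,g_4$ as arc length divided by $\mathrm{S}$ and integrate them against $f_{\rho}$ over the intervals of Tables~\ref{Table1}--\ref{Table2}, and your decoupling via $g_2-g_1=g_4-g_3$, together with the cancellation of the $0/\pi$ boundary terms at $r=R_1$ and $r=R_2$, supplies the merging argument that the paper only asserts. One small bookkeeping slip: in Regime~2, Interval~V ($2R_1<r\le R_2$) also belongs to the third branch, not only Interval~VI.
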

\begin{IEEEproof} 
The proof is provided in Appendix \ref{appA}.
\end{IEEEproof}
\begin{corollary}
\label{pdf2Db} The PDFs of $r$ for $R_{2}=R_{1}=R$ in $s_1$ and $s_2$ are given, respectively, by \cite{J:Vaiopoulos} 
{\small
\begin{equation}
f_{\mathbf{r}}(r)=\dfrac{2r}{\pi R^{2}}\arccos\hspace{-2pt}\left(\frac{r}{2R}\right), \quad 0\leq r\leq 2R ,\footnote{
See Table II with $f_r(r) \equiv g_2(\rho, r)$, $\rho \equiv R_2 \equiv R$, for s1, s4 in \cite{J:Vaiopoulos}.}
\label{pdfr1}
\end{equation}
}
{\small
\begin{equation}
f_{\mathbf{r}}(r)\hspace{-2pt}=\hspace{-2pt}\frac{4{r}^2\hspace{-2pt}\left(\hspace{-2pt}\sqrt{4{R}^2-r^{2}}\hspace{-2pt}-\hspace{-2pt}r\arccos\hspace{-2pt}\left(\frac{r}{2R}\right)\right)}{\pi R^{4}}, 0\leq r\leq2R ,\footnote{
See Table II with $f_r(r) \equiv g_2(\rho, r)$, $\rho \equiv R_2 \equiv R$, for s2, s3 in \cite{J:Vaiopoulos}.}
\label{pdfr2}
\end{equation}
}
\end{corollary}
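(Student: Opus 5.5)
The plan is to treat the case $R_2=R_1=R$ as the limit $R_2\downarrow R_1$ of Theorem~\ref{pdf2Da}, and to cross-check the result against a direct symmetry argument. In this limit the annulus collapses onto the circle of radius $R$. Node~2 therefore becomes uniformly distributed on that circle; this is the only meaningful reading, since $\mathrm{S}\to 0$. The intervals behave as follows.
\begin{itemize}
\item The first interval $0\le r\le R_2-R_1$ shrinks to the single point $r=0$.
\item The third interval $2R_1<r\le R_1+R_2$ becomes empty.
\item Only the middle branch of \eqref{Gen2D} survives, on $0<r\le 2R$.
\end{itemize}

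For the middle branch, both the bracketed expression and $\mathrm{S}=\pi(R_2^2-R_1^2)$ vanish at $R_2=R_1$. I will therefore apply L'H\^opital's rule in the variable $R_2$. The denominator derivative at $R_2=R$ is $2\pi R\cdot R_1^2$. In the numerator I need the $R_2$-derivatives of the terms containing $R_2$, namely $q_5\sin\phi_1$, $q_6\phi_1$, $q_7\phi_2$ and $q_8 k$.

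I must handle $k(r)$ and $\phi_2(r)=\theta_2(R_1,r)$ with care:
\begin{itemize}
\item $k(r)$ contains $\sqrt{r^2-(R_2-R_1)^2}$, whose derivative at $R_2=R_1$ is $0$.
\item $\phi_1(r)=\theta_1(R_2,r)$ and $\phi_2(r)$ both tend to $\arccos\bigl(r/(2R)\bigr)$.
\item The derivatives $\partial_{R_2}\phi_1$ and $\partial_{R_2}\phi_2$ both involve $1/\sqrt{1-r^2/(4R^2)}$. These terms must combine with the $\sin\phi$ terms into $\sqrt{4R^2-r^2}$.
\end{itemize}
For $s_1$ the coefficients from Table~\ref{2DTable} are $q_3=rR_1$, $q_4=-2R_1^2$, $q_5=-rR_2$, $q_6=R_2^2$, $q_7=R_1^2$ and $q_8=0$. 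At $R_2=R_1$ the pieces $q_3\sin\phi_0+q_5\sin\phi_1$ cancel, and $q_4\phi_0+q_6\phi_1+q_7\phi_2$ cancels as well, which confirms the $0/0$ form. Differentiating then leaves $2R\,\arccos(r/2R)$ plus $\sqrt{4R^2-r^2}$ terms that cancel among themselves. This yields \eqref{pdfr1}.

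For $s_2$ the same procedure applies with the RWP coefficients. The term $q_8 k$ now contributes, because $q_6$ carries the factor $2r^2-2R_1^2+R_2^2$. After differentiation and collection of terms, I expect the surviving terms to assemble into $4r^2\bigl(\sqrt{4R^2-r^2}-r\arccos(r/2R)\bigr)/(\pi R^4)$, which is \eqref{pdfr2}.

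As an independent route, and the cleaner one, I will use rotational symmetry to fix node~2 at $(R,0)$. The PDF of $r$ is then the distance distribution from a fixed point on the boundary to node~1, whose radial density is $f_{\bm\rho}^{\text U}$ or $f_{\bm\rho}^{\text{RWP}}$ in \eqref{UNrho1}. This is exactly the conditional PDF $g_2(\rho,r)$ of \cite{J:Vaiopoulos} evaluated at $\rho=R_2=R$, for the uniform and RWP cases respectively, so both formulas follow from that earlier result.

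The main obstacle is the L'H\^opital computation for $s_2$. The arccos derivatives are singular-looking, and I must verify that their $1/\sqrt{4R^2-r^2}$ parts cancel exactly rather than leave a residual term. I would first check this cancellation numerically at a sample value such as $r=R$. If it proves messy, I would rely on the symmetry argument as the primary proof and keep the limit only as a consistency check.
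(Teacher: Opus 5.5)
Your proposal is correct and follows the paper's proof, which is L'H\^opital's rule in $R_2$ on the middle branch of \eqref{Gen2D}; your $s_1$ computation is right, and the symmetry argument is a valid cross-check that matches the cited $g_2$ result. The $s_2$ step you flagged does close: at $R_2=R_1=R$ the bracket vanishes because $q_3\sin\phi_0$ cancels $q_8k$ (not $q_5\sin\phi_1$, since $q_5=0$), and its $R_2$-derivative is $\frac{4r}{R}\bigl(\sqrt{4R^2-r^2}-r\arccos(r/2R)\bigr)$, because the $1/\sqrt{4R^2-r^2}$ terms from $q_6\phi_1'+q_7\phi_2'+q_8k'$ sum to $\frac{3r}{2R}\sqrt{4R^2-r^2}$ and then combine with $q_8'k=\frac{5r}{2R}\sqrt{4R^2-r^2}$, so dividing by $2\pi R^3$ gives \eqref{pdfr2}.
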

\begin{IEEEproof}
Applying L’Hôpital’s rule to the second branch of $\eqref{Gen2D}$ for $s_1$ and $s_2$, with $R_2 = R_1 = R$, yields the inter-nodal distance distribution between a node uniformly distributed over a circular disk of radius $R$ (or moving within the disk according to the RWP model) and a second node uniformly distributed on the circumference of a circle of radius $R$.
\end{IEEEproof}
\begin{theorem}
\label{theorem1} The PDF expression for the internodal distance distribution, $r$, in a 2-D network with $R_{2}>3R_{1}$, for both $s_1$ and $s_2$, is given by
{\small
\begin{equation}
\label{Gen2Db}
\hspace{-2pt}f_{\mathbf{r}}(r)\hspace{-3pt}=\hspace{-4pt}
\begin{cases}
\frac{2 r} {\mathrm{S}R_1}\Bigl(\displaystyle\pi R_1\hspace{-3pt}+\hspace{-3pt}p_1(r) \sin(\phi_0(r))+\\
p_2(r) \phi_0(r)\Bigr),& 0\le r\le 2 R_1\\
\frac{2 r} {R_2^2-R_1^2},& 2R_1 < r \le R_2\hspace{-3pt}-\hspace{-3pt}R_1\\
\frac{2 r} {\mathrm{S}R_1^2} \,\Bigl(\displaystyle
p_3(r) \sin(\phi_1(r))\hspace{-3pt}+\hspace{-3pt}p_4(r) \phi_1(r)+\\
 p_5(r) \phi_2(r)\hspace{-3pt}+\hspace{-3pt}p_6(r) k(r)\Bigr), 
& R_2\hspace{-3pt} -\hspace{-3pt} R_1\hspace{-3pt} <\hspace{-3pt} r\hspace{-3pt} \le\hspace{-3pt} R_1\hspace{-3pt} +\hspace{-3pt} R_2.
\end{cases}
\end{equation}
}where $p_{1}(r)-p_6(r)$ are tabulated in Table \ref{2DTable}.
\end{theorem}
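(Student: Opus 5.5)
We derive the result from the conditional-density representation \eqref{CondProb}, using the interval decomposition of Table~\ref{Table3}. The first task is to write down the 2-D conditional densities explicitly. Node~2 is uniform on the annulus, with density $1/\mathrm{S}$. Given node~1 at distance $\rho$, the conditional PDF of $r$ is $2 r\,\alpha(\rho,r)/\mathrm{S}$, where $2\alpha$ is the angular measure of the circle of radius $r$ (centred at node~1) lying inside the annulus. By the law of cosines this gives
\begin{equation*}
g_3(r)=\frac{2\pi r}{\mathrm{S}},\quad g_1(\rho,r)=\frac{2r}{\mathrm{S}}\bigl(\pi-\theta_1(\rho,r)\bigr),\quad g_4(\rho,r)=\frac{2r}{\mathrm{S}}\,\theta_2(\rho,r).
\end{equation*}
Here $\theta_1$ is the half-angle of the arc lying inside the inner disk and $\theta_2$ is the half-angle of the arc lying inside the outer circle. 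The density $g_2$ does not arise when $R_2>3R_1$.

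We then treat the three branches of \eqref{Gen2Db} in turn.

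\textbf{Middle branch, $2R_1<r\le R_2-R_1$.} Table~\ref{Table3} shows that only $g_3$ contributes, over all of $0\le\rho\le R_1$. Since $f_{\bm\rho}$ integrates to one, $f_{\mathbf r}(r)=2\pi r/\mathrm{S}=2r/(R_2^2-R_1^2)$ for both $s_1$ and $s_2$.

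\textbf{First branch, $0\le r\le 2R_1$.} Table~\ref{Table3} splits this into Interval~I ($r\le R_1$) and Interval~II ($R_1<r\le 2R_1$). In Interval~II the $g_3$ piece equals $\frac{2\pi r}{\mathrm{S}}F_{\bm\rho}(r-R_1)$. The remaining integrals are of the form $\int \theta_1(\rho,r)\,\rho^{m}\,d\rho$ with $m=1$ or $m=3$. We evaluate them by integration by parts, differentiating $\arccos$. This produces algebraic integrals of the type $\int \rho^{m}\,\partial_\rho\theta_1\,d\rho$ whose radicand is $4r^2\rho^2-(\rho^2+r^2-R_1^2)^2$, a quadratic in $\rho^2$. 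These reduce to elementary functions and further arccosines via the substitution $u=\rho^2$.

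At the upper limit $\rho=R_1$ we have $\theta_1(R_1,r)=\arccos(r/2R_1)=\phi_0(r)$ and $\sqrt{\cdot}=r\sqrt{4R_1^2-r^2}$, which is proportional to $\sin\phi_0$. At the lower limits $\rho=R_1-r$ or $\rho=r-R_1$ the arccos takes the value $\pi$ or $0$. The key point to verify is that the boundary term $\pi F_{\bm\rho}(r-R_1)$ from $g_3$ exactly compensates the change of lower limit. This compensation is what lets Intervals~I and~II collapse into the single expression $\pi R_1+p_1\sin\phi_0+p_2\phi_0$. Collecting coefficients should reproduce $p_1$ and $p_2$: $r$ and $-2R_1$ for $f^{\mathrm U}$, and the quartic-weight analogues for $f^{\mathrm{RWP}}$.

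\textbf{Third branch, $R_2-R_1<r\le R_1+R_2$.} Here Intervals~IV and~V apply. The first part is $\frac{2\pi r}{\mathrm{S}}F_{\bm\rho}(R_2-r)$, which vanishes once $r>R_2$. The second part is $\int\theta_2(\rho,r)f_{\bm\rho}(\rho)\,d\rho$, taken from $|R_2-r|$ to $R_1$, and we integrate it by parts in the same way. At $\rho=R_1$ we get $\theta_2(R_1,r)=\phi_2(r)$ and the radicand becomes $k(r)^2$. The antiderivative of the algebraic part also yields a term in $\arccos\bigl((\rho^2-r^2+R_2^2)/(2R_2\rho)\bigr)$, the complementary angle seen from the center. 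Evaluated at $\rho=R_1$, this gives $\phi_1(r)$ once the $r\leftrightarrow$ role switch is handled, together with $\sin\phi_1$ terms. At the lower limit $\rho=|R_2-r|$ the angles take the values $0$ or $\pi$. As before, these must cancel against the $F_{\bm\rho}(R_2-r)$ term, so that a single formula covers both $r\le R_2$ and $r>R_2$. Matching coefficients then gives $p_3,\dots,p_6$.

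\textbf{Main obstacle.} The hardest step is the closed-form antiderivative of $\rho^{3}\theta_2(\rho,r)$ in the RWP case, which generates the $k(r)$ term and the $R_2^2(2R_1^2-2r^2-R_2^2)$ coefficient. A second difficulty is checking the branch cancellations at the lower limits, and in particular tracking the sign conventions of the arccos branches. To handle both, we would first compute the antiderivative symbolically in the variable $u=\rho^2$. We would then check continuity of $f_{\mathbf r}$ at $r=2R_1$ and $r=R_2-R_1$, and confirm that $f_{\mathbf r}(R_1+R_2)=0$, as sanity checks.
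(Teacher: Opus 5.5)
Your proposal is correct and follows the paper's route: you integrate the Appendix~A conditional densities $g_1,g_3,g_4$ against $f_{\bm\rho}$ over the Table~\ref{Table3} intervals, and your integration by parts in $u=\rho^2$ together with the Interval~I/II and IV/V merging checks just makes explicit what the paper leaves implicit (carried out, it does reproduce $p_1,\dots,p_6$ for both $s_1$ and $s_2$). One small correction: the inverse cosine produced by the $u=\rho^2$ antiderivative is the angle at the intersection point, $\arccos\bigl((r^2+R_2^2-\rho^2)/(2rR_2)\bigr)$, which equals $\phi_1(r)$ at $\rho=R_1$, whereas the centre angle you wrote evaluates there to $\pi-\phi_1(r)-\phi_2(r)$.
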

\begin{IEEEproof} 
The proof is provided in Appendix \ref{appA}.
\end{IEEEproof}
\begin{corollary}
\label{pdf2Dc} The PDF of $r$ for $R_{1}=0$ and  $R_{2}=R$ is given by $f_{\mathbf{r}}(r)=2r/R^{2}$, with $0\leq r\leq R$.
\end{corollary}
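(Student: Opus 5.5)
The corollary should follow by specializing Theorem~\ref{theorem1}. Setting $R_1=0$, $R_2=R$ formally satisfies $R_2>3R_1$, so the second branch of \eqref{Gen2Db} applies on $2R_1<r\le R_2-R_1$, which becomes $0<r\le R$. There $f_{\mathbf{r}}(r)=2r/(R_2^2-R_1^2)=2r/R^2$. The other two branches degenerate: the first lives on $[0,2R_1]=\{0\}$, and the third on $(R_2-R_1,R_1+R_2]=(R,R]=\emptyset$. So the stated density follows once the limit is shown to be legitimate.

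I would not substitute $R_1=0$ directly, because $f^{\text{U}}_{\bm\rho}$ and $f^{\text{RWP}}_{\bm\rho}$ contain $R_1$ in the denominator. Instead I would take the limit $R_1\to0^+$ with $R_2=R$ fixed. For every $R_1<R/3$, Theorem~\ref{theorem1} holds. For any fixed $r\in(0,R)$, we have $2R_1<r\le R-R_1$ as soon as $R_1$ is small enough, so $f_{\mathbf{r}}(r)=2r/(R^2-R_1^2)\to 2r/R^2$. This holds identically for $s_1$ and $s_2$, since the middle branch does not depend on the scenario coefficients.

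To make sure no mass is lost in the degenerate intervals, I would check that the first and third branches are bounded. The prefactors behave like $r/(\mathrm{S}R_1)$ or $r/(\mathrm{S}R_1^2)$, but the brackets vanish at matching orders. In any case, the branches are integrated over intervals of length $O(R_1)$, and the total integral equals $1$ for every $R_1$. Also $\int_0^R 2r/R^2\,\mathrm{d}r=1$. So the limiting mass in the shrinking intervals is zero, and the limit density is $2r/R^2$ on $[0,R]$.

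As a sanity check, there is a physical interpretation: node~1 collapses to the origin, whatever its distribution, and node~2 is uniform on the disk of radius $R$. Its distance from the center therefore has density $2r/R^2$, which matches the result. The only delicate step is justifying the limit, and the argument above handles it: pointwise convergence on $(0,R)$ together with conservation of total mass.
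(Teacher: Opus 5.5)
Your proposal is correct and takes the paper's route. The paper simply sets $R_1=0$, $R_2=R$ in the middle branch $2r/(R_2^2-R_1^2)$ of \eqref{Gen2Db}; your limit $R_1\to0^+$ with the total-mass argument is a more careful justification of that same substitution, since it sidesteps the $R_1$ in the denominators of $f_{\bm\rho}$, a point the paper leaves implicit.
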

\begin{IEEEproof}
The PDF is obtained by setting $R_1 = 0$ and $R_2 = R$ in the second branch of $\eqref{Gen2Db}$ for both $s_1$ and $s_2$, describing the distance distribution of a node uniformly distributed over a circular disk of radius $R$.
\end{IEEEproof}
\subsection{3-D networks}
\begin{table}[ht]
\centering
\caption{Coefficients for 3-D geometry}
\label{3DTable}
\resizebox{\columnwidth}{!}{%
\renewcommand{\arraystretch}{1.5}
\begin{tabular}{@{}c@{\hspace{4pt}} c c c}
\toprule
 & \textbf{Coeff.} & \textbf{$s_1$} & \textbf{$s_2$} \\
\midrule
\multirow{16}{*}{\rotatebox{90}{\shortstack{$\mathbf{R_2 \leq 3R_1}$ and $\mathbf{R_2 > 3R_1}$}}}
 & $a_3$ & \cellcolor{rowalt}$-\frac{9}{4R_1(R_1^3-R_2^3)}$ & \cellcolor{rowalt}$0$ \\[3pt]
 & $a_4$ & $0$ & $-\frac{35}{18R_1^2(R_1^3-R_2^3)}$ \\[3pt]
 & $a_5$ & \cellcolor{rowalt}$\frac{3}{16R_1^3(R_1^3-R_2^3)}$ & \cellcolor{rowalt}$-\frac{35}{16R_1^3(R_1^3-R_2^3)}$ \\[3pt]
 & $a_6$ & $0$ & $\frac{455}{144R_1^4(R_1^3-R_2^3)}$ \\[3pt]
 & $a_7$ & \cellcolor{rowalt}$0$ & \cellcolor{rowalt}$-\frac{287}{288R_1^5(R_1^3-R_2^3)}$ \\[3pt]
 & $a_8$ & $0$ & $0$ \\[3pt]
 & $a_9$ & \cellcolor{rowalt}$0$ & \cellcolor{rowalt}$\frac{65}{2304R_1^7(R_1^3-R_2^3)}$ \\[3pt]
\cmidrule{2-4}
 & $c_1$ & \cellcolor{rowalt}$\frac{9(R_1-R_2)(R_1+R_2)^2}{16R_1^3(R_1^2+R_1R_2+R_2^2)}$
 & \cellcolor{rowalt}$\frac{35(R_1-R_2)^2(R_1+R_2)^3(29R_1^2-13R_2^2)}{2304R_1^7(R_1^2+R_1R_2+R_2^2)}$ \\[6pt]
 & $c_2$ & $-\frac{3(R_1^3+R_2^3)}{2R_1^3(R_1^3-R_2^3)}$
 & $-\frac{72R_1^7+245R_1^4R_2^3-238R_1^2R_2^5+65R_2^7}{48R_1^7(R_1^3-R_2^3)}$ \\[6pt]
 & $c_3$ & \cellcolor{rowalt}$\frac{9(R_1^2+R_2^2)}{8R_1^3(R_1^3-R_2^3)}$
 & \cellcolor{rowalt}$\frac{35(R_1+R_2)(25R_1^4+88R_1^2R_2^2-65R_2^4)}{576R_1^7(R_1^2+R_1R_2+R_2^2)}$ \\[6pt]
 & $c_4$ & $0$
 & $\frac{35(17R_1^2R_2^3-13R_2^5)}{72R_1^7(R_1^3-R_2^3)}$ \\[6pt]
 & $c_5$ & \cellcolor{rowalt}$-\frac{3}{16R_1^3(R_1^3-R_2^3)}$
 & \cellcolor{rowalt}$-\frac{35(7R_1^4+34R_1^2R_2^2-65R_2^4)}{384R_1^7(R_1^3-R_2^3)}$ \\[6pt]
 & $c_6$ & $0$ & $-\frac{455R_2^3}{144R_1^7(R_1^3-R_2^3)}$ \\[3pt]
 & $c_7$ & \cellcolor{rowalt}$0$ & \cellcolor{rowalt}$\frac{7(17R_1^2+65R_2^2)}{576R_1^7(R_1^3-R_2^3)}$ \\[3pt]
 & $c_8$ & $0$ & $0$ \\[3pt]
 & $c_9$ & \cellcolor{rowalt}$0$ & \cellcolor{rowalt}$-\frac{65}{2304R_1^7(R_1^3-R_2^3)}$ \\[3pt]
\specialrule{0.8pt}{3pt}{3pt}
\multirow{7}{*}{\rotatebox{90}{only $\mathbf{R_2 \leq 3R_1}$}}
 & $b_1$ & \cellcolor{rowalt}$\frac{9(R_1-R_2)(R_1+R_2)^2}{16R_1^3(R_1^2+R_1R_2+R_2^2)}$
 & \cellcolor{rowalt}$\frac{35(R_1-R_2)^2(R_1+R_2)^3(29R_1^2-13R_2^2)}{2304R_1^7(R_1^2+R_1R_2+R_2^2)}$ \\[6pt]
 & $b_2$ & $\frac{3}{2R_1^3}$
 & $\frac{(R_1-R_2)(72R_1^5+144R_1^4R_2+216R_1^3R_2^2+43R_1^2R_2^3-130R_1R_2^4-65R_2^5)}{48R_1^7(R_1^2+R_1R_2+R_2^2)}$ \\[6pt]
 & $b_3$ & \cellcolor{rowalt}$-\frac{9(R_1+R_2)}{8R_1^3(R_1^2+R_1R_2+R_2^2)}$
 & \cellcolor{rowalt}$\frac{35(R_1+R_2)(25R_1^4+88R_1^2R_2^2-65R_2^4)}{576R_1^7(R_1^2+R_1R_2+R_2^2)}$ \\[6pt]
 & $b_4$ & $0$
 & $\frac{-35(4R_1^4+4R_1^3R_2+4R_1^2R_2^2-13R_1R_2^3-13R_2^4)}{72R_1^7(R_1^2+R_1R_2+R_2^2)}$ \\[6pt]
 & $b_5$ & \cellcolor{rowalt}$0$
 & \cellcolor{rowalt}$-\frac{35(R_1+R_2)(31R_1^2+65R_2^2)}{384R_1^7(R_1^2+R_1R_2+R_2^2)}$ \\[6pt]
 & $b_6$ & $0$ & $\frac{455}{144R_1^7}$ \\[3pt]
 & $b_7$ & \cellcolor{rowalt}$0$ & \cellcolor{rowalt}$-\frac{455(R_1+R_2)}{576R_1^7(R_1^2+R_1R_2+R_2^2)}$ \\[3pt]
\bottomrule
\end{tabular}
}
\end{table}
\begin{theorem}
\label{pdf3Da} The PDF expression for the internodal distance distribution, $r$, in a 3-D network with $R_{2}\leq 3R_{1}$, for both $s_1$ and $s_2$, is given by
\begin{equation}
\label{Gen3Da}
f_{\mathbf{r}}(r)=
\begin{cases}
\displaystyle \textstyle \sum_{n=3}^{9} a_n r^n, & 0 \le r \le R_2 - R_1,\\[2pt]
\displaystyle \textstyle \sum_{n=1}^{7} b_n r^n, & R_2 - R_1 < r \le 2 R_1,\\[2pt]
\displaystyle \textstyle \sum_{n=1}^{9} c_n r^n, & 2 R_1 < r \le R_1 + R_2.
\end{cases}
\end{equation}
where $a_n$, $b_n$, and $c_n$ are tabulated in Table \ref{3DTable}.
\end{theorem}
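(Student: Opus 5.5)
The plan is to evaluate \eqref{CondProb} with the 3-D densities \eqref{UNrho2}, using the interval decompositions of Tables~\ref{Table1} and \ref{Table2}. The first step is to derive the conditional densities $g_i(\rho,r)$ for node~2 uniform in the shell, with volume $V=\tfrac{4\pi}{3}(R_2^3-R_1^3)$. In 3-D the locus is a sphere of radius $r$ centred at node~1, and the spherical-cap formula makes every $g_i$ algebraic.

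The surface area of the sphere of radius $r$ about a point at distance $\rho$ from the origin, restricted to $\{|x|\le R\}$, equals $2\pi r^2(1-\cos\theta)$. Here $\cos\theta=(\rho^2+r^2-R^2)/(2r\rho)$, so the area is $\pi r\,(R^2-(\rho-r)^2)/\rho$. From this I read off the five conditional densities:
\begin{itemize}
\item $g_3(r)=4\pi r^2/V$;
\item $g_1(\rho,r)=\bigl(4\pi r^2-\pi r(R_1^2-(\rho-r)^2)/\rho\bigr)/V$;
\item $g_4(\rho,r)=\pi r(R_2^2-(\rho-r)^2)/(\rho V)$;
\item $g_2(\rho,r)=\pi r\bigl(R_2^2-R_1^2\bigr)/(\rho V)$, the difference of the two caps;
\item $g_0=0$.
\end{itemize}
These are rational in $\rho$ with at most a $1/\rho$ singularity. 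That singularity is cancelled by $f_{\bm\rho}$, which is $O(\rho^2)$ in both scenarios. Consequently every integrand $g_i f_{\bm\rho}$ is a polynomial in $(\rho,r)$, and every limit ($0$, $R_1$, $R_1-r$, $R_2-r$, $r-R_1$, $r-R_2$) is linear in $r$. Each piece of $f_{\mathbf r}(r)$ is therefore a polynomial in $r$ whose coefficients are rational in $R_1,R_2$. This explains the absence of arccos terms, unlike Theorem~\ref{pdf2Da}.

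Next I carry out the integrations interval by interval for $s_1$ (degree-2 density) and $s_2$ (degree-6 density). For $0\le r\le R_2-R_1$, Interval~I of Table~\ref{Table1} ($R_2\le 2R_1$) and of Table~\ref{Table2} ($2R_1<R_2\le 3R_1$) gives the same single integral $\int_{R_1-r}^{R_1}g_1f_{\bm\rho}\,\mathrm d\rho$. This yields $\sum a_nr^n$. The lowest power is $r^3$, since the integrand vanishes to second order in $r$ after the cap cancellation; I would check that the constant and lower-order terms cancel.

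The key claim of the unification is that on $R_2-R_1<r\le 2R_1$ and on $2R_1<r\le R_1+R_2$ the sums over Intervals II–V of Table~\ref{Table1} and II–V of Table~\ref{Table2} collapse to the same polynomials. I would show this structurally rather than by brute comparison. The integrand is a single polynomial $P_i(\rho,r)$ on each region, and adjacent $g_i$ agree where regions meet; for example, $g_1=g_3$ at $\rho=r-R_1$ and $g_4=g_3$ at $\rho=R_2-r$. Hence the total can be rewritten as
\[
\int_0^{R_1}g_3f_{\bm\rho}\,\mathrm d\rho
\]
plus correction integrals $\int (g_i-g_3)f_{\bm\rho}\,\mathrm d\rho$. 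Each correction is supported on $\rho$ beyond a breakpoint, namely $\rho>|R_1-r|$ for the inner correction and $\rho>|R_2-r|$ for the outer one. The inner correction $g_3-g_1$ is a polynomial vanishing quadratically at $\rho=|r-R_1|$, so its antiderivative is the same whether the breakpoint is $r-R_1$ or $R_1-r$. The same holds for the outer correction, whose antiderivative agrees at $\rho=\pm(R_2-r)$. This removes the dependence on the ordering of $r$ with respect to $R_1$ and $(R_1+R_2)/2$. The only genuine breakpoints are where a correction's lower limit leaves $[0,R_1]$, which gives exactly the thresholds $R_2-R_1$, $2R_1$ and $R_1+R_2$ in \eqref{Gen3Da}. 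As a consistency check, the $c_n$ coefficients coincide with those for $R_2>3R_1$, consistent with Table~\ref{3DTable}.

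I expect the main obstacle to be this unification argument, in particular verifying the quadratic tangency of each correction at its breakpoint in the $s_2$ case with the degree-6 density. After that comes the symbolic bookkeeping to match the tabulated $a_n,b_n,c_n$. I would do that expansion by computer algebra. I would then check that $\int_0^{R_1+R_2}f_{\mathbf r}\,\mathrm dr=1$, and that the pieces are continuous at $r=R_2-R_1$ and $r=2R_1$, as sanity tests.
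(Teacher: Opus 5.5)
Your conditional densities match Appendix~B (written there via $1\pm\cos\theta_{1,2}$ and $\cos\theta_1-\cos\theta_2$). Integrating $g_if_{\bm\rho}$ over the $\rho$-ranges of Tables~I and~II is exactly the paper's route; the paper simply asserts that the two regimes give identical polynomials. The gap is in your structural argument for that merging, which you single out as the key step. Two of its ingredients are false.

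\textbf{Support claim.} The deviations from $g_3$ are not supported on $\rho>|R_1-r|$ and $\rho>|R_2-r|$. For $r<R_1$ and $0\le\rho\le R_1-r$ the locus sphere lies inside the inner ball, so the integrand is $g_0=0$ and the deviation from $g_3$ is $-g_3$. The same happens for $r>R_2$ and $0\le\rho\le r-R_2$, where the sphere encloses the outer ball.

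\textbf{Antiderivative claim.} The polynomial $(g_3-g_1)f_{\bm\rho}=\pi r\bigl(R_1^2-(\rho-r)^2\bigr)f_{\bm\rho}(\rho)/(\rho\mathrm V)$ vanishes only to first order at $\rho=r-R_1$. At $\rho=R_1-r$ it equals $g_3f_{\bm\rho}\neq0$. Its antiderivative does \emph{not} agree at $\pm(R_1-r)$, and second-order vanishing would not imply that anyway. In fact, identically in $r$,
\[
\int_{r-R_1}^{R_1-r}(g_3-g_1)f_{\bm\rho}\,\mathrm d\rho=\int_0^{R_1-r}g_3f_{\bm\rho}\,\mathrm d\rho ,
\]
which in $s_1$ equals $4\pi r^2(R_1-r)^3/(R_1^3\mathrm V)$. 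The outer correction behaves the same way at $\pm(R_2-r)$, with the opposite sign. Your conclusion comes out right only because these two errors cancel.

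Also, $[0,R_2-R_1]$ is not a single interval in Table~II. It splits at $r=R_1$ into Intervals~I and~II, so the first branch of \eqref{Gen3Da} needs the same merging argument.

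\textbf{The missing lemma} is a reflection identity. Write $A(\rho,r;R)=\pi r\bigl(R^2-(\rho-r)^2\bigr)/\rho$ for the cap area. Then $A(\rho,r;R)+A(-\rho,r;R)=4\pi r^2$, and $f_{\bm\rho}$ is even in $\rho$ in both scenarios. Hence $\psi=Af_{\bm\rho}$ satisfies $\int_{-s}^{s}\psi\,\mathrm d\rho=4\pi r^2\int_0^{s}f_{\bm\rho}\,\mathrm d\rho$. This is exactly what absorbs the empty $g_0$ piece. For example, $\int_{R_1-r}^{R_1}g_1f_{\bm\rho}\,\mathrm d\rho=\int_0^{R_1}g_3f_{\bm\rho}\,\mathrm d\rho-\int_{r-R_1}^{R_1}(g_3-g_1)f_{\bm\rho}\,\mathrm d\rho$ holds identically in $r$, and the analogous identity removes the breakpoint $r=R_2$. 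The $(R_1+R_2)/2$ ordering is harmless, as you say, because the shell area is the difference of the two ball-intersection areas.

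With this fix, your decomposition becomes a genuine proof: a base term $3r^2/(R_2^3-R_1^3)$, an inner correction active for $r<2R_1$, and an outer correction active for $R_2-R_1<r<R_1+R_2$. It also explains more than the paper does. It predicts $b_n=a_n+c_n$ for $n\neq2$ and $b_2=c_2-3/(R_2^3-R_1^3)$, and Table~\ref{3DTable} satisfies both in $s_1$ and $s_2$. Alternatively, drop the structural claim and verify each interval of Tables~I and~II by direct symbolic integration, which is what the paper relies on.
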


\begin{IEEEproof} 
The proof is provided in Appendix \ref{appB}.
\end{IEEEproof}
\begin{corollary}
\label{Corol1} The PDFs of $r$ for $R_{2}=R_{1}=R$ in $s_1$ and $s_2$ are given, respectively, by
{\small
\begin{equation}
f_{\mathbf{r}}(r)=\dfrac{3{r}^2(2R-r)}{4R^{4}}, \quad 0\leq r \leq2R ,\footnote{
See Table III with $f_r(r) \equiv g_2(\rho, r)$, $\rho \equiv R_2 \equiv R$, for s1, s4 in \cite{J:Vaiopoulos}.}
\label{pdfr3}
\end{equation}
}
{\small
\begin{equation}
f_{\mathbf{r}}(r)\hspace{-2pt}=\hspace{-2pt}\dfrac{35{r}^3{(2R-r)}^2(25{R}^2-13{(r-R)}^2)}{864R^{8}}, 0\leq r\leq2R ,\footnote{
See Table III with $f_r(r) \equiv g_2(\rho, r)$, $\rho \equiv R_2 \equiv R$, for s2, s3 in \cite{J:Vaiopoulos}.}
\label{pdfr4}
\end{equation}
}
\end{corollary}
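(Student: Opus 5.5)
The plan is to obtain both formulas by specializing the middle branch of \eqref{Gen3Da} in Theorem~\ref{pdf3Da}, in the same way Corollary~\ref{pdf2Db} was derived from Theorem~\ref{pdf2Da}. With $R_2=R_1=R$, the first interval $0\le r\le R_2-R_1$ shrinks to the single point $r=0$, and the third interval $2R_1<r\le R_1+R_2$ is empty. Only the middle interval survives, and it becomes $0<r\le 2R$, so $f_{\mathbf{r}}(r)=\sum_{n=1}^{7}b_n r^n$ there.

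Physically this limit is meaningful and gives a finite density. As the shell collapses, the uniform (normalized) location of node~2 in the shell tends to a uniform location on the sphere of radius $R$. This matches the interpretation in the footnotes referring to $g_2(\rho,r)$ with $\rho\equiv R_2\equiv R$ in \cite{J:Vaiopoulos}. The $a_n$ and $c_n$ carry the factor $(R_1^3-R_2^3)$ in their denominators and are singular at $R_1=R_2$, but they attach to degenerate intervals and play no role. The $b_n$ have denominators built from $R_1^2+R_1R_2+R_2^2$ and $R_1^7$, which stay nonzero at $R_1=R_2$. So the substitution can be done directly, with no L'H\^opital step, and continuity of the $b_n$ in $(R_1,R_2)$ justifies passing to the limit.

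\textbf{Scenario $s_1$.} Here $b_1$ vanishes because of its factor $(R_1-R_2)$. Substitution gives $b_2=3/(2R^3)$ and $b_3=-9(2R)/(8R^3\cdot 3R^2)=-3/(4R^4)$, while $b_4$ through $b_7$ are identically zero. Hence $f_{\mathbf{r}}(r)=\frac{3r^2}{2R^3}-\frac{3r^3}{4R^4}=\frac{3r^2(2R-r)}{4R^4}$, which is \eqref{pdfr3}.

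\textbf{Scenario $s_2$.} Both $b_1$ and $b_2$ vanish, through the factors $(R_1-R_2)^2$ and $(R_1-R_2)$ respectively. The remaining coefficients become monomials in $R^{-1}$; for example,
\[
b_3=\frac{35(2R)(48R^4)}{576R^7\cdot 3R^2}=\frac{35}{18R^4}.
\]
Likewise $b_4=-35(-14R^4)/(216R^9)$ after simplification, and $b_5$, $b_6$, $b_7$ are evaluated the same way.

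I will then expand the target \eqref{pdfr4}, namely $\frac{35r^3(4R^2-4Rr+r^2)(12R^2+26Rr-13r^2)}{864R^8}$, into powers $r^3,\dots,r^7$ and match coefficients. Its $r^3$ coefficient is $35\cdot 48/(864R^4)=35/(18R^4)$, which agrees with $b_3$. The higher powers should agree in the same way.

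The main risk is purely arithmetic: the $s_2$ coefficients $b_4$ to $b_7$ contain lengthy polynomial numerators that must be evaluated at $R_1=R_2$ and simplified correctly. As an independent check, I will confirm that the resulting $s_2$ density integrates to one over $[0,2R]$.
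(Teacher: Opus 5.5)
Your proposal is correct and is essentially the paper's own argument: the paper likewise just sets $R_2=R_1=R$ in the middle branch of \eqref{Gen3Da}, where the $b_n$ stay regular, so no L'H\^opital step is needed, unlike the 2-D corollary. The arithmetic you left open also checks out: $b_4\to 245/(108R^5)$, $b_5\to -35/(6R^6)$, $b_6\to 455/(144R^7)$ and $b_7\to -455/(864R^8)$, which match $\frac{35r^3}{864R^8}\left(48R^4+56R^3r-144R^2r^2+78Rr^3-13r^4\right)$, the expanded form of \eqref{pdfr4}.
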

\begin{IEEEproof}
The PDF follows by setting $R_2 = R_1 = R$ in the second branch of $\eqref{Gen3Da}$ for $s_1$ and $s_2$, and describes the inter-nodal distance between a node uniformly distributed within the spherical region (or moving according to the RWP model) and a second node uniformly distributed on the surface of a sphere of radius $R$.
\end{IEEEproof}
\begin{theorem}
\label{pdf3Db} The PDF expression for the internodal distance distribution, $r$, in a 3-D network with $R_{2}>3R_{1}$, for both $s_1$ and $s_2$, is given by
\begin{equation}
\label{Gen3Db}
f_{\mathbf{r}}(r)=
\begin{cases}
\displaystyle \textstyle \sum_{n=3}^{9} a_n r^n, 
& 0 \le r \le 2 R_1 \\
 \frac{3 r^2}{R_2^3-R_1^3},& 2 R_1 < r \le R_2- R_1\\[2pt]
\displaystyle \textstyle \sum_{n=1}^{9} c_n r^n, 
& R_2 - R_1 < r \le R_1 + R_2.
\end{cases}
\end{equation}
where $a_n$ and $c_n$ are tabulated in Table \ref{3DTable}.
\end{theorem}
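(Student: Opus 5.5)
We prove Theorem \ref{pdf3Db} by evaluating \eqref{CondProb} over the intervals of Table \ref{Table3} (Regime 3, $R_2>3R_1$), using the 3-D conditional densities $g_i(\rho,r)$ of Appendix~A. In 3-D the conditional densities are rational in $\rho$ and $r$: by Archimedes' hat-box theorem the area of a spherical cap is linear in its height. So no trigonometric terms appear and every branch reduces to a polynomial in $r$.

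\emph{Conditional densities.} With $V=\tfrac{4\pi}{3}(R_2^3-R_1^3)$, the fully contained case gives
\[
g_3(r)=\frac{4\pi r^2}{V}=\frac{3r^2}{R_2^3-R_1^3}.
\]
For $g_1$, the part of the sphere of radius $r$ about node~1 lying outside the ball of radius $R_1$ is a cap with polar angle set by $\cos\theta=(\rho^2+r^2-R_1^2)/(2r\rho)$. Its area is $2\pi r^2(1+\cos\theta)$, so $g_1(\rho,r)=2\pi r\,(r+\rho)^2-\ldots$ reduces to
\[
g_1(\rho,r)=\frac{2\pi r}{V}\cdot\frac{(r+\rho)^2-R_1^2}{2\rho}.
\]
The same construction gives
\[
g_4(\rho,r)=\frac{2\pi r}{V}\cdot\frac{R_2^2-(r-\rho)^2}{2\rho},
\]
which is the part inside radius $R_2$.

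\emph{Branch-by-branch integration.} We insert $f_{\bm{\rho}}^{\mathrm{U}}=3\rho^2/R_1^3$ (for $s_1$) or $f_{\bm{\rho}}^{\mathrm{RWP}}$ from \eqref{UNrho2} (for $s_2$). The factor $1/\rho$ cancels against $\rho^2$, so every integrand is a polynomial in $\rho$, and the limits are linear in $r$. The branches are then as follows.

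\textbf{Interval I} ($0\le r\le R_1$): $\int_{R_1-r}^{R_1} g_1 f_{\bm\rho}\,d\rho$.

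\textbf{Interval II} ($R_1<r\le 2R_1$): $\int_0^{r-R_1} g_3 f_{\bm\rho}+\int_{r-R_1}^{R_1} g_1 f_{\bm\rho}$.

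We must show that Intervals I and II yield the \emph{same} polynomial $\sum_{n=3}^{9}a_nr^n$. The key observation is that $g_1$, viewed as a polynomial expression, equals $g_3$ exactly at $\rho=r-R_1$. Equivalently, the cap-area formula $(r+\rho)^2-R_1^2$ over $2\rho$ agrees with the full-sphere value $2r$ there. Differentiating the Interval II expression with respect to $r$ and comparing it with Interval I then shows that the two polynomial expressions coincide identically. A cleaner route is to note that the Interval I formula, written as $\int_{|R_1-r|}^{R_1}$ of the polynomial, has a contribution from $[r-R_1,\,R_1-r]$ that is odd-symmetric in a way that makes it vanish. We will simply verify this by direct expansion, and for $r\le R_1$ we check that only the powers $r^3,\dots,r^9$ survive, matching $a_n$ in Table \ref{3DTable}.

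\textbf{Interval III} ($2R_1<r\le R_2-R_1$): here $g_3$ is independent of $\rho$ and $\int_0^{R_1}f_{\bm\rho}=1$, giving $3r^2/(R_2^3-R_1^3)$ at once.

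\textbf{Intervals IV and V}: $\int_0^{R_2-r}g_3 f_{\bm\rho}+\int_{R_2-r}^{R_1}g_4 f_{\bm\rho}$ and $\int_{r-R_2}^{R_1}g_4f_{\bm\rho}$. The same polynomial-agreement argument at $\rho=R_2-r$ shows these two coincide, yielding $\sum_{n=1}^9 c_nr^n$. Since neither these intervals nor Interval III involve $g_1$ or $g_2$, the coefficients $c_n$ must be identical to those of Theorem \ref{pdf3Da}'s third branch, and this is consistent with Table \ref{3DTable}.

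\emph{Main obstacle.} The main obstacle is the bookkeeping in the $s_2$ case, where the degree-6 RWP density produces a degree-9 polynomial with rational coefficients. Verifying that the merged branches collapse to the single tabulated coefficient lists is the error-prone part, so we will carry it out by symbolic expansion. As a sanity check we will confirm that the result integrates to one and is continuous at $r=2R_1$ and $r=R_2-R_1$.
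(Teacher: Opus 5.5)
Your route is the paper's own. Appendix~B only lists the 3-D conditional densities, and Table~III supplies the interval-wise integrals. Your $g_1$, $g_3$, $g_4$, with the $1/\rho$ cancelling against $f_{\bm{\rho}}$, and your branch-by-branch integration are exactly that. The merged branches do come out as single polynomials with the tabulated coefficients. The paper never explains why Intervals I/II and IV/V collapse to one polynomial each. The two arguments you give for this do not work as written, so your proof currently rests entirely on the fallback of brute-force symbolic expansion.

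\emph{The differentiation argument is incomplete.} Agreement of $g_1$ with $g_3$ at $\rho=r-R_1$, together with $g_1=0$ at $\rho=R_1-r$, only cancels the boundary terms when you differentiate in $r$. The difference of derivatives still contains $g_3'(r)\int_0^{r-R_1}f_{\bm{\rho}}\,\mathrm{d}\rho-\int_{R_1-r}^{r-R_1}\partial_r\bigl(g_1f_{\bm{\rho}}\bigr)\,\mathrm{d}\rho$. That observation alone does not make this term vanish.

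\emph{The ``cleaner route'' reaches the wrong conclusion.} For $r>R_1$, the part of the Interval~I expression $\int_{R_1-r}^{R_1}$ lying over $[R_1-r,\,r-R_1]$ does \emph{not} vanish. If it did, Intervals I and II would differ by exactly the $g_3$ term. Also, the polynomial continuation has lower limit $R_1-r$, not $|R_1-r|$.

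\emph{The correct fact is a parity identity.} Write
\[
g_1(\rho,r)=\frac{\pi r}{V}\Bigl(\frac{\rho^2+r^2-R_1^2}{\rho}+2r\Bigr),\qquad
g_4(\rho,r)=\frac{\pi r}{V}\Bigl(\frac{R_2^2-r^2-\rho^2}{\rho}+2r\Bigr).
\]
The even part of each in $\rho$ is $2\pi r^2/V=g_3(r)/2$. Both $f_{\bm{\rho}}^{\mathrm{U}}$ and $f_{\bm{\rho}}^{\mathrm{RWP}}$ are even polynomials, so for $u>0$
\[
\int_{-u}^{u}g_i(\rho,r)f_{\bm{\rho}}(\rho)\,\mathrm{d}\rho=\int_{0}^{u}g_3(r)f_{\bm{\rho}}(\rho)\,\mathrm{d}\rho,\qquad i\in\{1,4\}.
\]
\begin{itemize}
\item Taking $u=r-R_1$ turns the Interval~I expression identically into the Interval~II expression.
\item Taking $u=r-R_2$ handles Intervals IV and V. You also need $\int_0^{R_2-r}f_{\bm{\rho}}=-\int_0^{r-R_2}f_{\bm{\rho}}$, since the antiderivative is odd. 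Then the $g_3$ term of Interval~IV cancels the $[-u,u]$ piece of the $g_4$ integral, leaving $\int_{r-R_2}^{R_1}g_4f_{\bm{\rho}}\,\mathrm{d}\rho$.
\end{itemize}

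This identity is also the real reason the $c_n$ coincide with those of Theorem~\ref{pdf3Da}. There, too, the last branch is built from the same two $g_3/g_4$ integral expressions, and the identity shows they form one polynomial. The absence of $g_1$ and $g_2$ is not by itself the reason.
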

\begin{IEEEproof} 
The proof is provided in Appendix \ref{appB}.
\end{IEEEproof}
\begin{corollary}
\label{Corol4} The PDF of $r$ for $R_{1}=0$ and  $R_{2}=R$ is given by $f_{\mathbf{r}}(r)=3{r}^2/R^{3}$, with $0\leq r\leq R$.
\end{corollary}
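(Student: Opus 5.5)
The corollary follows directly from the second branch of \eqref{Gen3Db} in Theorem~\ref{pdf3Db}, specialized to $R_1=0$ and $R_2=R$. That branch gives $f_{\mathbf{r}}(r)=3r^2/(R_2^3-R_1^3)$ on the interval $2R_1<r\le R_2-R_1$. Setting $R_1=0$ turns this interval into $0<r\le R$, and the density becomes $3r^2/R^3$. Since $R_2>3R_1$ holds trivially when $R_1=0$ and $R>0$, Theorem~\ref{pdf3Db} is formally in force.

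The first and third branches play no role, because their supports collapse. The first branch lives on $0\le r\le 2R_1$, which shrinks to the single point $r=0$ (measure zero). The third branch lives on $R_2-R_1<r\le R_1+R_2$, which becomes the empty interval $R<r\le R$.

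The one delicate point is that the coefficients $a_n$ and $c_n$ in Table~\ref{3DTable} carry negative powers of $R_1$. So one cannot literally substitute $R_1=0$ into those branches. One must instead argue, as above, that their supports vanish. No limit of the coefficients is ever needed, since only the middle branch survives and it contains no such singular terms.

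As a consistency check, $\int_0^R 3r^2/R^3\,\mathrm{d}r=1$, so the result is a valid density. It is also what one expects directly from the model. The inner sphere degenerates to its center, so $f_{\bm\rho}$ in \eqref{UNrho2} concentrates at $\rho=0$ for both $s_1$ and $s_2$. The spherical surface of radius $r$ centred at the origin then lies entirely inside the shell, so $g_3(r)$ applies, as in Interval III of Table~\ref{Table3}. With $R_1=0$ this $g_3(r)$ is the uniform-ball radial density $4\pi r^2/\big(\tfrac{4}{3}\pi R^3\big)=3r^2/R^3$. The answer is therefore the distance from the center of a sphere of radius $R$ to a uniformly distributed node, and it is independent of the scenario.
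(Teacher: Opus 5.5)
Your proposal is correct and follows the paper's own argument, which simply sets $R_1=0$ and $R_2=R$ in the second branch of \eqref{Gen3Db}. Your extra points are valid refinements the paper leaves implicit: the first and third branches have degenerate supports, so the singular $R_1^{-k}$ factors in $a_n$ and $c_n$ never need to be evaluated, and you check the result directly via $g_3(r)$.
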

\begin{IEEEproof}
The PDF is obtained by setting $R_1 = 0$ and $R_2 = R$ in the second branch of $\eqref{Gen3Db}$ for both $s_1$ and $s_2$, describing the distance distribution of a node uniformly distributed within a spherical region of radius $R$.
\end{IEEEproof}

\subsection{Approximation with beta distributions}
\label{sec:beta}
Although exact closed-form PDFs are derived, their use in wireless communication problems can be further simplified by approximating them using a beta distribution  with PDF \cite{J:Ermolova}.
\begin{equation}
{f_{\bf{x}}}(x) = \frac{{{x^{\alpha  - 1}}{{(1 - x)}^{\beta  - 1}}}}{{B(\alpha ,\beta )}},\begin{array}{*{20}{c}}
{}&{}&{x \in \left[ {0,1} \right]},
\end{array}
\label{betaDist}
\end{equation} 
where $\alpha$ and $\beta$ are the distribution parameters, and $B(\alpha,\beta)$ is the beta function. The parameters are obtained via moment matching (first and second moments) with the target distribution, yielding the corresponding optimal values. Figure \ref{Figure4} illustrates representative PDF plots for a 2-D network, with $R_{2}/R_{1}=2$ and $R_{2}/R_{1}=3.5$, validated against Monte Carlo simulations using $10^{5}$ independent realizations. 

Figure \ref{Figure5} depicts the Kullback-Leibler ($\mathrm{KL}$) divergence between the exact PDFs 
and their beta-based approximations as a function of $R_{2}/R_{1}$.  In 2-D networks, the $\mathrm{KL} = 
10^{-2}$ threshold is reached at $R_{2}/R_{1} \approx 7$ and $R_{2}/R_{1} 
\approx 5$ for $s_1$ and $s_2$, respectively; the tighter bound in $s_2$ 
stems from the spatial bias of the RWP stationary distribution, which 
sharpens the PDF in a manner less compatible with the beta form. In 3-D, accuracy degrades more rapidly, with thresholds 
at $R_{2}/R_{1} \approx 5.5$ and $R_{2}/R_{1} \approx 3.5$
--- though the practical need for approximation is reduced 
since the exact 3-D expressions are already polynomial in 
$r$ and thus inherently tractable.

\begin{figure}[t]
\centering
\begin{minipage}{0.23\textwidth}
\centering
\includegraphics[width=\linewidth]{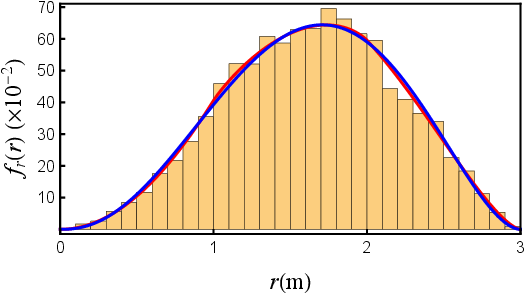}\\
{\scriptsize (a) $s_1$, $\{\alpha,\beta\}=\{3.419,2.832\}$}
\end{minipage}\hfill
\begin{minipage}{0.23\textwidth}
\centering
\includegraphics[width=\linewidth]{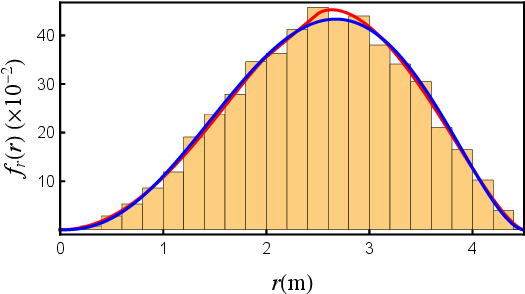}\\
{\scriptsize (b) $s_1$, $\{\alpha,\beta\}=\{3.537,2.735\}$}
\end{minipage}

\vspace{0.5em}

\begin{minipage}{0.23\textwidth}
\centering
\includegraphics[width=\linewidth]{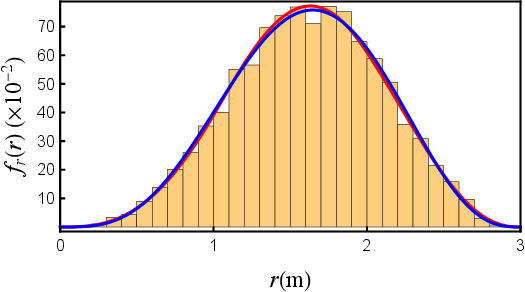}\\
{\scriptsize (c) $s_2$, $\{\alpha,\beta\}=\{4.589,3.951\}$}
\end{minipage}\hfill
\begin{minipage}{0.23\textwidth}
\centering
\includegraphics[width=\linewidth]{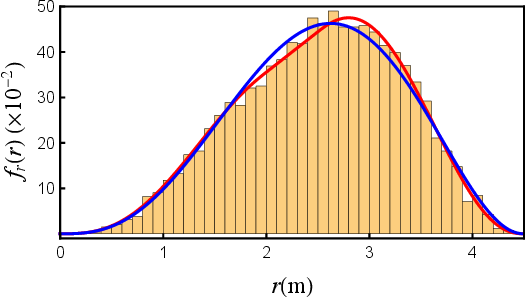}\\
{\scriptsize (d) $s_2$, $\{\alpha,\beta\}=\{3.993,3.140\}$}
\end{minipage}

\caption{PDFs for a 2-D network: $\{R_1, R_2\}=\{1,2\}\mathrm{m}$ in (a), (c) and $\{1,3.5\}\mathrm{m}$ in (b), (d); analytical (red), beta (blue), and Monte Carlo (histograms).}
\label{Figure4}
\end{figure}

\begin{figure}[ht]
\centering
\includegraphics[width=0.8\linewidth]{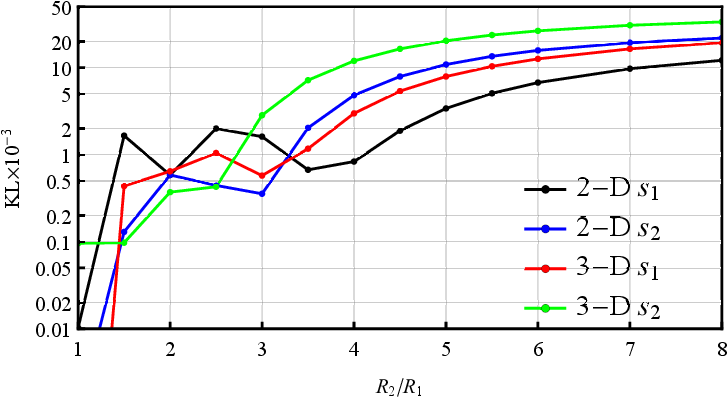}
\caption{$\mathrm{KL}$ divergence against $R_2/R_1$.}
\label{Figure5}
\end{figure}
 
\section{Conclusions}
This letter derived closed-form expressions for the internodal distance distribution in 2-D disk–annulus and 3-D sphere–shell wireless networks. Two scenarios were considered: independently distributed nodes in distinct regions and a hybrid case with a static outer node and a mobile inner node following the RWP model.  In addition, simple beta-based approximations have been shown to provide high accuracy. The results offer a useful tool for performance evaluation in bounded concentric wireless regions.

\appendices
\section{Proof of the Main Results}
\label{app}

\subsection{2-D annulus}
\label{appA}
Due to the uniform distribution of node 2 within the annular region defined by radii $R_1$ and $R_2$, the conditional PDF can be expressed as the ratio of the length of the circular arc of radius $r$ lying within the annulus to the total area of the annulus, $\mathrm{S}$. Depending on the relative position of the circle of radius $r$, the conditional PDF is given by $g_1(\rho,r)\hspace{-2pt}=\hspace{-2pt} \frac{2r}{\mathrm{S}}(\pi\hspace{-2pt} -\hspace{-2pt} \theta_1(\rho,r))$, $g_2(\rho,r)\hspace{-2pt} =\hspace{-2pt} \frac{2r}{\mathrm{S}}(\theta_2(\rho,r)\hspace{-2pt}-\hspace{-2pt}\theta_1(\rho,r))$, $g_3(r)\hspace{-2pt} =\hspace{-2pt} \frac{2\pi r}{\mathrm{S}}$, and $g_4(\rho,r)\hspace{-2pt} =\hspace{-2pt} \frac{2r}{\mathrm{S}}\theta_2(\rho,r)$.
Here, $\theta_1(\rho,r)$ and $\theta_2(\rho,r)$ represent the angular limits defined by the intersections of the circle of radius $r$ with the inner and outer boundaries of the annulus, respectively.

\subsection{3-D spherical shell}
\label{appB}
For a node uniformly distributed within a spherical shell of inner radius $R_1$ and outer radius $R_2$, the conditional PDF can be expressed as the ratio between the surface area of the spherical segment of radius $r$ contained in the shell and the total volume of the shell. Depending on the intersection regime, we obtain: $g_1(\rho,r)\hspace{-2pt}=\hspace{-2pt}\frac{2\pi r^2}{\mathrm{V}}(1\hspace{-2pt}+\hspace{-2pt}\cos(\theta_1(\rho,r)))$, $g_2(\rho,r)\hspace{-2pt}=\hspace{-2pt}\frac{2\pi r^2}{\mathrm{V}}(\cos(\theta_1(\rho,r))\hspace{-2pt}-\hspace{-2pt}\cos(\theta_2(\rho,r)))$, $g_3(r)=\frac{4\pi r^2}{\mathrm{V}}$, and $g_4(\rho,r)=\frac{2\pi r^2}{\mathrm{V}}(1\hspace{-2pt}-\hspace{-2pt}\cos(\theta_2(\rho,r)))$ where $\mathrm{V}=(4/3)\pi(R_2^3-R_1^3)$.
Here, $\theta_1(\rho,r)$ and $\theta_2(\rho,r)$ denote the angular limits determined by the intersections of the sphere of radius $r$ with the inner and outer boundaries of the spherical shell, respectively.
The corresponding conditional PDFs for the annulus or spherical shell, along with the associated interval-dependent integrals, are provided in Tables I–III.

\bibliographystyle{IEEEtran}
\bibliography{IEEEabrv,References}

\begin{thebibliography}{1}
\providecommand{\url}[1]{#1}
\csname url@samestyle\endcsname
\providecommand{\newblock}{\relax}
\providecommand{\bibinfo}[2]{#2}
\providecommand{\BIBentrySTDinterwordspacing}{\spaceskip=0pt\relax}
\providecommand{\BIBentryALTinterwordstretchfactor}{4}
\providecommand{\BIBentryALTinterwordspacing}{\spaceskip=\fontdimen2\font plus
\BIBentryALTinterwordstretchfactor\fontdimen3\font minus \fontdimen4\font\relax}
\providecommand{\BIBforeignlanguage}[2]{{%
\expandafter\ifx\csname l@#1\endcsname\relax
\typeout{** WARNING: IEEEtran.bst: No hyphenation pattern has been}%
\typeout{** loaded for the language `#1'. Using the pattern for}%
\typeout{** the default language instead.}%
\else
\language=\csname l@#1\endcsname
\fi
#2}}
\providecommand{\BIBdecl}{\relax}
\BIBdecl

\bibitem{J:Armeniakos}
H.~K. Armeniakos, P.~S. Bithas, S.~A. Tegos, A.~G. Kanatas, and G.~K. Karagiannidis, ``Stochastic geometry for modeling and analysis of sensing and communications: A survey,'' \emph{{IEEE} Commun. Surveys Tuts.}, vol.~28, pp. 2691--2724, 2026.

\bibitem{B:Mathai}
A.~M. Mathai, \emph{An Introduction to Geometrical Probability: Distributional Aspects with Applications}.\hskip 1em plus 0.5em minus 0.4em\relax Amsterdam, The Netherlands: Gordon \& Breach, 1999.

\bibitem{J:Moltchanov}
D.~Moltchanov, ``Distance distributions in random networks,'' \emph{Ad Hoc Netw.}, vol.~10, no.~6, pp. 1146--1166, 2012.

\bibitem{J:Parry}
M.~Parry and E.~Fischbach, ``Probability distribution of distance in a uniform ellipsoid: Theory and applications to physics,'' \emph{J. Math. Phys.}, vol.~41, no.~4, pp. 2417--2433, 2000.

\bibitem{J:Hyytia}
E.~Hyytia, P.~Lassila, and J.~Virtamo, ``Spatial node distribution of the random waypoint mobility model with applications,'' \emph{{IEEE} Trans. Mobile Comput.}, vol.~5, no.~6, pp. 680--694, Jun. 2006.

\bibitem{J:Vaiopoulos}
N.~Vaiopoulos, A.~Vavoulas, H.~G. Sandalidis, K.~K. Delibasis, and D.~Varoutas, ``Internodal distance distributions for static and mobile nodes in 2{D}/3{D} wireless networks,'' \emph{{IEEE} Commun. Lett.}, vol.~30, pp. 1395--1399, 2026.

\bibitem{J:Ermolova}
N.~Y. Ermolova and O.~Tirkkonen, ``Using beta distributions for modeling distances in random finite networks,'' \emph{IEEE Commun. Lett.}, vol.~20, no.~2, pp. 308--311, Feb. 2015.

\end{thebibliography}
\end{document}